\theoremstyle{definition}
\newtheorem{definition}{Definition}
\newtheorem{theorem}{Theorem}
\newtheorem{lemma}{Lemma}
\newtheorem{proposition}{Proposition}
\numberwithin{definition}{section}
\numberwithin{theorem}{section}
\numberwithin{lemma}{section}
\numberwithin{corollary}{section}
\numberwithin{proposition}{section}
\title{\Large \bf Bell pair extraction using graph foliage techniques}
\author{\normalsize Derek Zhang\footnote{Email: \href{mailto:derekzhang800@yahoo.com}{derekzhang800@yahoo.com}}  \\
\small \it Department of Physics, Amherst College, Amherst, Massachusetts 01002, USA}
\date{\small November 25, 2023}
\begin{document}

\maketitle

\begin{abstract}
    Future quantum networks can facilitate communication of quantum information between various nodes. We are particularly interested in whether multiple pairs can communicate simultaneously across a network. Quantum networks can be represented with graph states, and producing communication links amounts to performing certain quantum operations on graph states. This problem can be formulated in a graph-theoretic sense with the (Bell) vertex-minor problem. We discuss the recently introduced foliage partition and provide a generalization. This generalization leads us to a useful result for approaching the vertex-minor problem. We apply this result to identify the exact solution for the Bell vertex-minor problem on line, tree, and ring graphs.
\end{abstract}

\tableofcontents

\pagebreak

\section{Introduction}

Entanglement is simultaneously the most confounding and most useful phenomenon in quantum mechanics. Two particles in an entangled quantum state, known as a Bell pair, can be used to communicate quantum information \cite{tele}. In the realm of quantum information theory, entanglement can be exploited to perform communication over quantum networks \cite{transform}. Communication has been experimentally realized over long-distance, global networks \cite{satellite}, towards the eventual construction of a quantum internet \cite{qnet}. On a quantum network, we would like to create \textit{resource states} which can then be used for communication or other computational tasks \cite{algo}. Efficient processing of resource states is necessary due to the rapid deleterious effect of noise on quantum information \cite{noisy}. In general, generic resource states are first prepared, and then simultaneous communication requests between various pairs are processed \cite{qnrlc}. Overlapping requests may create \textit{bottlenecks}, and efficiently identifying these bottlenecks is central to problems in quantum network routing \cites{qnrlc,limits}.

Quantum networks can be represented efficiently using graph states, which are a subset of stabilizer states \cites{gstates,stabilizer}. We are interested in determining whether a specific target state can be obtained from a graph state by local Clifford operations, Pauli measurements, and classical communication. This problem was introduced as the \textit{qubit-minor problem} by ref. \cite{transform}, which showed its NP-completeness and demonstrated its equivalence to the \textit{vertex-minor problem}, which has been previously studied in graph theory contexts \cites{lc,vminor}. The question of whether multiple Bell pairs can be extracted for communication is the Bell vertex-minor problem, which is also known to be NP-complete \cite{bellvm}.

The \textit{foliage} of a graph is a useful graph-theoretic tool for addressing the vertex-minor problem. The foliage structure contains information about relationships between vertices which are preserved under local Clifford operations. This structure was first recognized by ref. \cite{transform} and named by ref. \cite{limits}. It was further elucidated by ref. \cite{fpartition}, which introduced the foliage partition and foliage graph, which can be efficiently computed. The primary goal of this paper is to apply the foliage partition and graph to the Bell vertex-minor problem.

This paper is organized follows. In Section 2, we introduce basic concepts in graph theory and provide an overview of graph states and the vertex-minor problem. In Section 3, we redefine the foliage partition and graph in a broader sense than in ref. \cite{fpartition}. In Section 4, we use the foliage to prove our main theorem, a powerful way to reduce vertex-minor problems. In Section 5, we use this theorem to completely determine the Bell vertex-minor problem for line, tree, and ring graphs. Our work expands on ref. \cite{limits}, which proved bottleneck results for line and ring graphs.

\section{Preliminaries}\label{sec:prelim}

\subsection{Graph Theory}

\begin{definition}
    A finite, undirected \textbf{graph} is a pair $G=(V,E)$ with vertices $V = \{1, \dots, N \}$ and edges $E \subset \{ \{ a, b\} \mid a,b \in V, a \neq b \} = [V]^2$.
\end{definition}

\begin{itemize}
    \item The \textbf{neighborhood} of a vertex $a$ is $N_a = \{ b \in V \mid \{a,b\} \in E\}$. The \textbf{degree} of $a$ is $\deg{a} = \abs{N_a}$. If $\deg{a}=0$, $a$ is an \textbf{isolated vertex}, and if $\deg{a} = 1$, $a$ is a \textbf{leaf}.
    \item Two graphs $G_1=(V_1,E_1)$, $G_2=(V_2,E_2)$ are \textbf{isomorphic} if there exists a bijection $f:V_1 \to V_2$ such that $\{a,b\} \in E_1 \iff \{f(a),f(b)\} \in E_2$. Graph theory is mainly concerned with properties of graphs that are invariant under isomorphism, or a permutation of vertices.
    \item For a graph $G$ with vertex $a$, $G\setminus a$ (\textbf{vertex deletion}) is the graph obtained by deleting $a$ and all edges connected to it. For a set of vertices $U$, $G\setminus U$ is defined similarly.
    \item For a graph $G = (V,E)$ and a set of edges $F$, $G \cup F = (V, E\cup F)$ and $G + F = (V, E+F)$, where $E+F$ corresponds to addition modulo 2 or the symmetric difference $(E \cup F) \setminus (E \cap F)$.
    \item The \textbf{complement} of a graph $G=(V,E)$ is $G'=(V,[V]^2\setminus E)$.
    \item The \textbf{subgraph} $G[A]$ induced by a subset of vertices $A$ on $G$ is the graph with vertices $A$ and all edges between them that were in $G$,
\end{itemize}

\subsection{Graph States and LC-Equivalence}

For any graph, we consider a system of qubits labeled by the vertices, corresponding to the Hilbert space $\mathcal{H}^V = \bigotimes_{a\in V} \mathcal{H}^a \cong \bigotimes_{a\in V} \mathbb{C}^2$. Edges can be thought of abstractly as interactions between pairs of qubits $\{a,b\}$. We represent this interaction with the Controlled-Z gate on qubits $a$ and $b$, $CZ_{ab} = \mathbbm{1}_{ab} - 2\ket{11}^{ab}\bra{11}$.

\begin{definition}
    Given a graph $G(V,E)$, define its \textbf{graph state} to be
    \begin{align}
        \ket{G} = \prod_{\{a,b\} \in E}CZ_{ab} \ket{+}^V,
    \end{align}\label{def:graphstate1}
\end{definition}

\vspace{-10pt}
\noindent where $\ket{+}^V = \bigotimes_{a\in V} \ket{+}^a$ and $\ket{+} = \frac{1}{\sqrt{2}}(\ket{0}+\ket{1})$. To prepare qubits in a graph state, initialize every qubit in the state $\ket{+}$ and then apply $CZ$ gates for each edge.

\begin{figure}[h]
\begin{center}
\begin{tikzpicture}[scale=0.75, transform shape]
  \node[circle, fill=purple!60] (1) at (0,0) {\textcolor{white}{1}};
  \node[circle, fill=purple!60] (2) at (2,0) {\textcolor{white}{2}};
  \node[circle, fill=purple!60] (3) at (2,2) {\textcolor{white}{3}};
  \node[circle, fill=purple!60] (4) at (0,2) {\textcolor{white}{4}};
  
  \draw[gray, line width=1.5] (1) -- (2);
  \draw[gray, line width=1.5] (2) -- (3);
  \draw[gray, line width=1.5] (2) -- (4);

  \node at (1,-0.3) {$CZ_{12}$};
  \node at (1.2,1.4) {$CZ_{24}$};
  \node at (2.5,1) {$CZ_{23}$};
\end{tikzpicture}
\end{center}
\caption{A four vertex graph with corresponding $CZ$ gates.}
\label{fig:4v}
\end{figure}
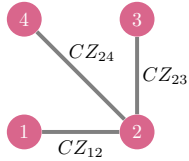

For an in-depth overview of graph states, see ref. \cite{gstates}. While every graph corresponds to a unique graph state, we can define a notion of equivalence between different graph states using local Clifford operations.

\begin{definition}
    Two graphs $G=(V,E)$ and $G'=(V',E')$ are \textbf{LC-equivalent}, denoted $G \sim_{\text{LC}} G'$, if there exists a local Clifford unitary $U \in \mathcal{C}_1^V$ such that $\ket{G'} = U\ket{G}$.
\end{definition}

The local Clifford group is $\mathcal{C}_1^V = \{U \in U(2)^V \mid U\mathcal{P}^V U^* = \mathcal{P}^V\}$, where $\mathcal{P}$ is the Pauli group generated by the Pauli matrices. It can be shown that elements in $\mathcal{C}_1^V$ are composed of sequences of $H$ and $S$ gates \cite{gstates}. The group properties of local Clifford unitaries give that LC-equivalence is an equivalence relation. LC-equivalence can be conveniently formulated in terms of an operation on graphs.

\begin{definition}
    The \textbf{local complement} of a graph $G$ at vertex $a$ is obtained by complementing the subgraph $G[N_a]$, i.e. $\tau_a(G) = G + K_{N_a}$, where $K_{N_a}$ is the complete graph on the vertices of $N_a$.
\end{definition}

\begin{figure}[h]
\begin{center}
\begin{tikzpicture}[scale=0.75, transform shape]
  \node[circle, fill=purple!60] (1) at (0,0) {\textcolor{white}{1}};
  \node[circle, fill=purple!80] (2) at (2,0) {\textcolor{white}{2}};
  \node[circle, fill=purple!60] (3) at (2,2) {\textcolor{white}{3}};
  \node[circle, fill=purple!60] (4) at (0,2) {\textcolor{white}{4}};
  
  \draw[gray, line width=1.5] (1) -- (2);
  \draw[gray, line width=1.5] (2) -- (3);
  \draw[gray, line width=1.5] (2) -- (4);
  \draw[gray, line width=1.5] (1) -- (3);
  
  \draw[<->] (3,1) -- (4,1);
  \node at (3.5,1.25) {$\tau_2$};

  \node[circle, fill=purple!60] (1) at (5,0) {\textcolor{white}{1}};
  \node[circle, fill=purple!80] (2) at (7,0) {\textcolor{white}{2}};
  \node[circle, fill=purple!60] (3) at (7,2) {\textcolor{white}{3}};
  \node[circle, fill=purple!60] (4) at (5,2) {\textcolor{white}{4}};
  
  \draw[gray, line width=1.5] (1) -- (2);
  \draw[gray, line width=1.5] (2) -- (3);
  \draw[gray, line width=1.5] (2) -- (4);
  \draw[gray, line width=1.5] (1) -- (4);
  \draw[gray, line width=1.5] (3) -- (4);
\end{tikzpicture}
\end{center}
\caption{Local complementation on vertex 2 of a four vertex graph. Observe that $\tau_a = \tau_a^{-1}$.}
\label{fig:lc}
\end{figure}
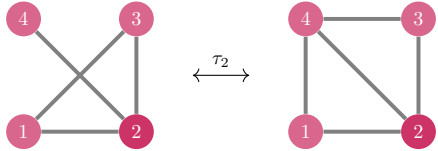

\begin{theorem}[Proposition 5 in \cite{gstates}]
    Given a graph $G$, its local complement at a vertex $a$ yields an LC-equivalent graph  $\tau_a(G)$. Furthermore, any two LC-equivalent graphs are related by a sequence of local complementations.\footnote{Conveniently, ``LC" corresponds to both local Clifford and local complementation; we will use it to abbreviate the latter.} \label{theo:sequence}
\end{theorem}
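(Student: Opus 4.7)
The plan is to prove the two parts of the statement separately, the first by direct construction and the second by a Van den Nest--Dehaene--De Moor style normal-form argument. For the forward direction, I would exhibit an explicit local Clifford unitary implementing $\tau_a$. A standard choice is
\[
U_a = \sqrt{-i X_a}\,\bigotimes_{b \in N_a} \sqrt{i Z_b},
\]
which sits in $\mathcal{C}_1^V$ since each factor is a single-qubit Clifford. The cleanest verification operates at the level of stabilizers: $\ket{G}$ is the joint $+1$ eigenstate of the operators $K_c = X_c \prod_{d \in N_c} Z_d$ for $c \in V$, and conjugating each $K_c$ by $U_a$ using the Pauli rotation rules for $\sqrt{-iX}$ and $\sqrt{iZ}$ produces, after simplification, exactly the stabilizer generators of $\ket{\tau_a(G)}$. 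The case analysis splits into $c=a$, $c \in N_a$, and $c \notin N_a \cup \{a\}$, and the new $Y$-factors pair up in precisely the pattern that reveals the added edges of $K_{N_a}$; uniqueness of the stabilized state then gives $U_a\ket{G} = \ket{\tau_a(G)}$ up to a global phase.

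For the converse, I would assume $\ket{G'} = U\ket{G}$ for some $U \in \mathcal{C}_1^V$ and reduce the Clifford to a chain of local complementations. Factor $U$ as a tensor product of single-qubit Cliffords, each a product of $H$ and $S$ gates, and argue inductively that every such elementary factor, when applied to a graph state, either preserves it up to a phase or can be realized by applying a short sequence of $\tau$ moves to the underlying graph. The key inputs are the explicit $\tau$-implementation from the first part together with the observation that $\ket{G}$ already carries the $CZ$ structure against which $H$ and $S$ must be commuted; pushing an $H$ or $S$ through the $CZ$-network reproduces precisely the transformations generated by $\tau_a$ (possibly followed by $\tau_b$ at a neighbor $b$), so the Clifford word is traded for a word in the $\tau$ moves.

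The hard part is the converse: the difficulty is not producing individual local complementations but proving completeness, namely that the $\tau$-orbit of $G$ exhausts every graph $G'$ whose state is LC-equivalent to $\ket{G}$. Clifford operations carry a binary symplectic structure that a priori looks richer than any finite set of graph moves, so establishing equality of the two orbits requires either a normal-form analysis of stabilizer tableaux or an invariant-theoretic argument such as Bouchet's theorem on isotropic systems. This is the step where most of the technical work lives, and it is the reason the result is typically imported rather than reproved from scratch.
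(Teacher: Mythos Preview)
The paper does not prove this theorem at all: it is stated with a citation to \cite{gstates} and imported wholesale, so there is no paper-side argument to compare against. Your forward direction is exactly the standard construction from that reference, and the stabilizer check you outline goes through cleanly.

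Your converse sketch, however, contains a real gap in the part you describe as ``argue inductively that every such elementary factor, when applied to a graph state, either preserves it up to a phase or can be realized by applying a short sequence of $\tau$ moves.'' This is not how the proof works and it cannot be made to work: applying a single $H$ or $S$ to one qubit of a graph state generally does \emph{not} return a graph state, so you cannot peel off Clifford factors one at a time and absorb each into a $\tau$-move before proceeding to the next. The intermediate states leave the class of graph states, and the ``push $H$ or $S$ through the $CZ$-network'' picture does not give you a well-defined graph at each step. The actual Van den Nest--Dehaene--De Moor argument works globally in the binary symplectic picture: one parametrizes local Cliffords $U$ that send graph states to graph states by certain $2\times 2$ block matrices over $\mathrm{GF}(2)$ satisfying constraints tied to the adjacency matrix, and then shows that the solution set is generated (on the level of the resulting graphs) by local complementations. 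Alternatively one routes through Bouchet's theory of isotropic systems. You correctly flag the converse as the hard part and note it is usually imported, but the specific inductive mechanism you propose would not survive an attempt to write it out.
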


The \textit{local} operations involved in LC-equivalence preserve entanglement \textit{between} vertices in some sense. As such, studying graphs up to LC-equivalence is a proxy for studying graphs up to their entanglement properties. It is worth noting here that since our vertices represent physical qubits, they must be labeled. As such, isomorphic graphs are frequently not LC-equivalent.

\subsection{Vertex-Minor Problem}

In quantum mechanics, it is well known that measurement of a particle ``collapses" its wave function. In graph states, this translates conveniently to vertex deletion. The graph operations $X_a,Y_a,Z_a$ correspond to Pauli (projective) measurements on qubits $P_{i,\pm}^a = \frac{1 \pm \sigma_i^a}{2}$ in the $i=x,y,z$ bases: \cite{gstates}
\begin{align}
    P_{z,\pm}^a \ket{G} &= \ket{G \setminus a} = \ket{Z_a(G)}\\
    P_{y,\pm}^a \ket{G} &= \ket{\tau_a(G) \setminus a} = \ket{Y_a(G)}\\
    P_{x,\pm}^a \ket{G} &= \ket{\tau_b\circ\tau_a\circ\tau_b(G) \setminus a} = \ket{X_a(G)} \label{eq:paulix}
\end{align}

For $P_{x,\pm}^a$, we choose some $b \in N_a$ and write $X_a^{(b)}$; all choices will yield LC-equivalent graphs.\footnote{For the special case $N_a = \emptyset$, take $X_a = Z_a$.}

\begin{figure}[h]
\begin{center}
\begin{tikzpicture}[scale=0.75, transform shape]
  \node[circle, fill=purple!60] (1) at (0,0) {\textcolor{white}{1}};
  \node[circle, fill=purple!80] (2) at (2,0) {\textcolor{white}{2}};
  \node[circle, fill=purple!60] (3) at (2,2) {\textcolor{white}{3}};
  \node[circle, fill=purple!60] (4) at (0,2) {\textcolor{white}{4}};
  
  \draw[gray, line width=1.5] (1) -- (2);
  \draw[gray, line width=1.5] (2) -- (3);
  \draw[gray, line width=1.5] (2) -- (4);
  \draw[gray, line width=1.5] (1) -- (3);
  
  \draw[->] (1,-0.5) -- (1,-1.5);
  \node at (1.3,-1) {$Y_2$};

  \node[circle, fill=purple!60] (1) at (0,-4) {\textcolor{white}{1}};
  \node[circle, fill=purple!60] (3) at (2,-2) {\textcolor{white}{3}};
  \node[circle, fill=purple!60] (4) at (0,-2) {\textcolor{white}{4}};
  
  \draw[gray, line width=1.5] (1) -- (4);
  \draw[gray, line width=1.5] (3) -- (4);

  \draw[->] (-1.5,-0.5) -- (-2.5,-1.5);
  \node at (-1.5,-1) {$Z_2$};
  
  \node[circle, fill=purple!60] (1) at (-4,-4) {\textcolor{white}{1}};
  \node[circle, fill=purple!60] (3) at (-2,-2) {\textcolor{white}{3}};
  \node[circle, fill=purple!60] (4) at (-4,-2) {\textcolor{white}{4}};
  
  \draw[gray, line width=1.5] (1) -- (3);

  \draw[->] (3.5,-0.5) -- (4.5,-1.5);
  \node at (4.5,-1) {$X_2$};
  
  \node[circle, fill=purple!60] (1) at (4,-4) {\textcolor{white}{1}};
  \node[circle, fill=purple!60] (3) at (6,-2) {\textcolor{white}{3}};
  \node[circle, fill=purple!60] (4) at (4,-2) {\textcolor{white}{4}};
  
  \draw[gray, line width=1.5] (1) -- (3);
  \draw[gray, line width=1.5] (1) -- (4);
  \draw[gray, line width=1.5] (3) -- (4);
  
\end{tikzpicture}
\end{center}
\caption{Pauli measurements on vertex 2 of a four vertex graph. Coincidentally, all three choices of neighbors for the $x$-basis measurement yield the same graph.}
\label{fig:paulim}
\end{figure}
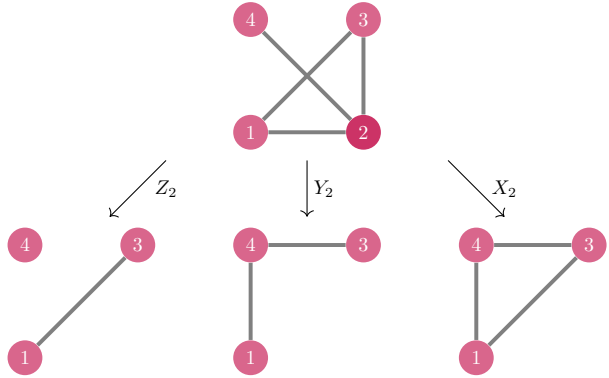

\begin{definition}
    A graph $H$ is a \textbf{vertex-minor} of a graph $G$, denoted $H<G$, if it can be obtained from $G$ through local complementations and vertex deletions.
\end{definition}

We refer to $G$ as the ``source graph" and $H$ as the ``target graph." The decision problem of whether $H<G$ is the \textbf{vertex-minor problem} \cite{vminor}, which in general is NP-complete \cite{transform}. It was shown in ref. \cite{transform} that the vertex-minor problem is equivalent to the qubit-minor problem \textemdash\ whether one graph state can be obtained from another through local Clifford operations, Pauli measurements, and classical communication.

\begin{proposition}
    Given graphs $G_1<G_2$ and $G_2<G_3$, we have $G_1<G_3$.\label{prop:vminortransitive}
\end{proposition}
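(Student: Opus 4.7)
The plan is to unpack the definition of vertex-minor and observe that the proof reduces to concatenating two sequences of allowed graph operations. By the definition, $G_1 < G_2$ means there is a finite sequence $\sigma_1 = (o_1^{(1)}, o_2^{(1)}, \dots, o_k^{(1)})$ of local complementations and vertex deletions taking $G_2$ to $G_1$, and similarly $G_2 < G_3$ gives a sequence $\sigma_2 = (o_1^{(2)}, o_2^{(2)}, \dots, o_\ell^{(2)})$ taking $G_3$ to $G_2$.

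First I would form the concatenation $\sigma_2 \cdot \sigma_1 = (o_1^{(2)}, \dots, o_\ell^{(2)}, o_1^{(1)}, \dots, o_k^{(1)})$. Each entry in this combined sequence is itself either a local complementation $\tau_a$ or a vertex deletion, so the concatenated sequence lies in the same class of admissible operations. Applying $\sigma_2$ to $G_3$ produces $G_2$ by hypothesis, and then applying $\sigma_1$ to $G_2$ produces $G_1$, again by hypothesis; hence the combined sequence takes $G_3$ to $G_1$, establishing $G_1 < G_3$.

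There is no real obstacle here: the statement is essentially a tautology once one verifies that ``obtainable by a sequence of local complementations and vertex deletions'' is closed under composition, which is immediate since the class of allowed single-step operations is the same at every stage (one need only be mindful that deletions along $\sigma_2$ reduce the vertex set, but since the operations in $\sigma_1$ act on the vertex set of $G_2$, which equals the vertex set obtained after $\sigma_2$, no ill-defined operation is ever invoked). In short, the vertex-minor relation is transitive because it is defined as the reflexive-transitive closure, under composition, of two basic operations; the proof is a one-line concatenation argument, which is what I would write.
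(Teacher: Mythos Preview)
Your proposal is correct and matches the paper's proof exactly: both simply concatenate the two sequences of local complementations and vertex deletions to obtain a single sequence taking $G_3$ to $G_1$. The paper states this in one line, and your elaboration on well-definedness of the concatenated operations is a harmless extra remark.
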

\begin{proof}
    The two sequences of LCs and vertex deletions associated with $G_1<G_2$ and $G_2<G_3$ can be concatenated into another sequence for $G_1<G_3$.
\end{proof}

While the vertex-minor problem can be formulated without Pauli measurements, they are still useful, as exhibited in the following result.

\begin{theorem}[Theorem 3.2 in \cite{transform}]
    Given graphs $G(V,E)$ and $G'(V',E')$ with $V' \subset V$, let $V \setminus V' = \{ v_1, \dots v_k\}$. Then $G'<G$ iff $G' \sim_{\text{LC}} (P_{v_1} \circ \dots \circ P_{v_k}) (G)$ for some choices of $P_{v_i} \in \{X_{v_i},Y_{v_i},Z_{v_i}\}$.\label{theo:preduction}
\end{theorem}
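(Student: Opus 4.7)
The $(\Leftarrow)$ direction is essentially an unpacking of definitions: each $P_{v_i} \in \{X_{v_i}, Y_{v_i}, Z_{v_i}\}$ is built from local complementations followed by deletion of $v_i$, so $(P_{v_1} \circ \dots \circ P_{v_k})(G)$ is a vertex-minor of $G$ with vertex set $V'$. The hypothesis $G' \sim_{\text{LC}} (P_{v_1} \circ \dots \circ P_{v_k})(G)$ supplies further local complementations, and Proposition \ref{prop:vminortransitive} yields $G' < G$.

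For the harder $(\Rightarrow)$ direction, I would induct on $k = |V \setminus V'|$. The base case $k = 0$ is immediate since $G' < G$ with $V = V'$ forces $G' \sim_{\text{LC}} G$. For the inductive step, fix any witnessing sequence of local complementations and vertex deletions taking $G$ to $G'$, and identify the first vertex deletion in the sequence; call this vertex $v$. Let $H$ be the graph obtained from $G$ by applying all operations strictly preceding this deletion (all of which are local complementations). Then $H \sim_{\text{LC}} G$ and the next step of the sequence produces $H \setminus v$.

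The crux is the following lemma: \emph{if $H \sim_{\text{LC}} G$ on a common vertex set $V$ and $v \in V$, then $H \setminus v \sim_{\text{LC}} P_v(G)$ for some $P_v \in \{X_v, Y_v, Z_v\}$.} To prove it, I would pass to the quantum formalism. Writing $\ket{H} = U \ket{G}$ for a local Clifford $U = \bigotimes_{a \in V} U_a$, the projective $Z$-measurement on qubit $v$ applied to $\ket{H}$ can be rewritten by commuting $U$ through the projector as the measurement of $U_v^\dagger Z U_v$ on $\ket{G}$ followed by the action of $U$. Since $U_v$ is a single-qubit Clifford, $U_v^\dagger Z U_v = \pm P$ for some $P \in \{X, Y, Z\}$, and the post-measurement state on $V \setminus \{v\}$ is the image under a local Clifford of $\ket{P_v(G)}$ (with $P_v$ the corresponding graph Pauli operation). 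This gives $H \setminus v \sim_{\text{LC}} P_v(G)$.

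Armed with the lemma, I would replace the initial segment of the sequence (up to and including $\setminus v$) by $P_v$ followed by a fresh block of local complementations that realizes $P_v(G) \sim_{\text{LC}} H \setminus v$. Concatenating these with the remaining tail of the original sequence produces a new witness that $G' < P_v(G)$ using only $k - 1$ deletions, and the inductive hypothesis then supplies Paulis $P_{v_2}, \dots, P_{v_k}$ with $G' \sim_{\text{LC}} (P_{v_2} \circ \dots \circ P_{v_k} \circ P_v)(G)$, completing the induction after relabeling. The main obstacle is the key lemma: it boils down to the fact that single-qubit Cliffords conjugate single-qubit Paulis to single-qubit Paulis, which is transparent from the quantum side but awkward to see purely via the $\tau_a$ identities, so a principled translation between the graph and state pictures is the step that has to be executed carefully.
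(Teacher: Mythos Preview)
The paper does not supply its own proof of this statement: Theorem~\ref{theo:preduction} is quoted verbatim as Theorem~3.2 of \cite{transform} and no argument is given here or in the appendix. So there is nothing in the present paper to compare your proposal against.

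That said, your sketch is the standard route and is essentially correct. The $(\Leftarrow)$ direction is immediate as you say. For $(\Rightarrow)$, the induction on $k$ together with your key lemma is exactly how the result is usually established. One small point worth making explicit when you write it out: after conjugating the $Z$-projector through $U_v$ you get a projector onto an eigenstate of $\pm P$, and the two measurement outcomes differ only by a Pauli correction on the neighbours of $v$, which is itself a local Clifford on $V\setminus\{v\}$; this is why both branches land in the same LC class and the ``up to local Cliffords'' survives the measurement. Also note that in the inductive step the first deleted vertex $v$ need not be $v_1$ in your fixed enumeration of $V\setminus V'$, so the relabelling at the end is really a permutation of the $P_{v_i}$'s, not just a shift of indices; this is harmless since the theorem quantifies over all choices of Paulis, but it is worth saying.
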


\section{Foliage}

The foliage structure of a graph is a way to associate vertices which retain a \textit{foliage-equivalence} under LCs. This structure provides us with a means to simplify the vertex-minor problem for any arbitrary graph.

\begin{definition}
    Given a graph $G(V,E)$,
    \begin{enumerate}[label=(\alph*)]
        \item Vertex $v$ is a \textbf{leaf} if it has degree 1 ($\abs{N_v}=1$). Its unique neighbor is an \textbf{axil}. Define $L_G = \{(l,a) \mid l,a \text{ are a leaf-axil pair} \}$.
        \item Vertex $v$ is a \textbf{twin} if $\exists w\neq v$ such that $N_v \setminus w = N_w \setminus v \neq \emptyset$. Define $T_G = \{\{t_1,t_2\} \mid t_1,t_2 \text{ are a twin pair} \}$.
        \item The \textbf{foliage} is the set of all leaves, axils, and twins of $G$.
    \end{enumerate}
    
\end{definition}

The foliage was shown to be LC-invariant (unchanged under LCs) in ref. \cite{limits}.

\subsection{Foliage Partition}

\begin{definition} [Definition 2 in \cite{fpartition}]
    Given a graph $G$, two vertices $v,w\in G$ are \textbf{foliage-equivalent}, denoted $v \sim_\text{F} w$, if $v=w$, $(v,w)\in L_G$, $(w,v)\in L_G$, or $\{v,w\} \in T_G$.
\end{definition}

\begin{definition}
    A \textbf{partition} of a set of vertices $V$ is a set of disjoint subsets $A = \{A_1,\dots,A_k\}$ with $A_i \subset V$ and $\bigcup_i A_i = V$. Given two partitions $A$ and $B$ of $V$ with every $A_i$ contained in some $B_j$, we write $A\leq B$. We say $A$ is a \textbf{finer partition} than $B$, and $B$ is a \textbf{coarser partition} than $A$.
\end{definition}

For any partition $A$, we have $A \leq A$. The finest partition contains each vertex as a singleton, and the coarsest partition is $\{V\}$.

\begin{proposition}
    Foliage-equivalence is an equivalence relation.\label{prop:feer}
\end{proposition}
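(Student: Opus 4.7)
The plan is to verify the three axioms directly from the definition. Reflexivity is immediate from the first clause ($v = w$). Symmetry is also immediate on inspection: the definition lists both $(v,w) \in L_G$ and $(w,v) \in L_G$ as sufficient conditions (covering a leaf-axil pair in either order), and $\{v,w\} \in T_G$ is already symmetric as a two-element set. So if $v \sim_\text{F} w$ holds via any clause, the corresponding clause gives $w \sim_\text{F} v$.

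Transitivity is where the work lies. Given $u \sim_\text{F} v$ and $v \sim_\text{F} w$, and discarding cases where two of the three vertices coincide, I would case-analyze on which of three relation types each $\sim_\text{F}$ uses (leaf with axil, axil with leaf, or twin), giving nine subcases. I expect the following two structural facts to absorb most of the analysis:

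\textbf{(a)} If $v$ is the axil of some leaf $l$, then $v$ cannot be a twin of any $w \neq l$. Indeed, if $\{v,w\} \in T_G$, then $l \in N_v \setminus \{w\} = N_w \setminus \{v\}$ forces $l \in N_w$; but $N_l = \{v\}$, so $w = v$, contradicting $v \neq w$.

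\textbf{(b)} (Twin transitivity) If $\{u,v\}, \{v,w\} \in T_G$, the two defining equalities force $N_u$, $N_v$, $N_w$ to agree on $V \setminus \{u,v,w\}$. Checking the adjacency status of each pair against the twin equalities rules out mixing adjacent and non-adjacent twin pairs, leaving only the uniformly adjacent or uniformly non-adjacent case. Both cases yield $N_u \setminus \{w\} = N_w \setminus \{u\}$, which is nonempty (it contains $v$ in the adjacent case, and inherits nonemptiness from the $u$-$v$ twin condition in the non-adjacent case).

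Armed with (a) and (b), the case analysis is short. Two leaves sharing an axil $v$ have $N_u = N_w = \{v\}$, so $N_u \setminus \{w\} = N_w \setminus \{u\} = \{v\}$, giving $\{u,w\} \in T_G$. Chains of the form ``leaf-with-axil" glued to ``axil-with-leaf" at a common $v$ collapse via the constraint $|N_l|=1$ to force $u = w$. The leaf-twin combinations are either ruled out by (a) (when $v$ is the axil) or, when $v$ is the leaf, a direct computation of $N_w \setminus \{v\} = N_v \setminus \{u\} = \{u\}$ (or $\emptyset$) shows $w$ is itself a leaf of the same axil, yielding $(w,u) \in L_G$. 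The only remaining case is twin-twin, handled by (b). The main obstacle is this last case: it requires the bookkeeping to rule out mixed adjacency types, which is precisely where the nonemptiness clause in the twin definition does real work.
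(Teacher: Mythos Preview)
Your proposal is correct and takes essentially the same approach as the paper: reflexivity and symmetry are immediate from the definition, and transitivity is handled by a finite case analysis on the leaf/axil/twin type of each relation, with the twin--twin case requiring the most bookkeeping (the paper compresses your nine cases to six via the $u\leftrightarrow w$ symmetry, but the content is identical). One minor slip: in your leaf--twin case with $v$ the leaf, the displayed equality should read $N_w \setminus \{v\} = N_v \setminus \{w\}$ (the twin condition), and nonemptiness of that set forces $w \neq u$, so it equals exactly $\{u\}$ rather than possibly $\emptyset$.
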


It is well known that the set of equivalence classes is a partition. From this fact we can define the canonical foliage partition.

\begin{definition}
    Given a graph $G$, its \textbf{canonical foliage partition} $F(V) = \{ V_1, \dots, V_k \}$ is the set of foliage-equivalence classes.\footnote{Ref. \cite{fpartition} used the notation $\hat{V}$ for the canonical foliage partition and $\hat{G}$ for the canonical foliage graph. We introduce $F(V)$ and $F(G)$, respectively, to accommodate a generalization.}
\end{definition}

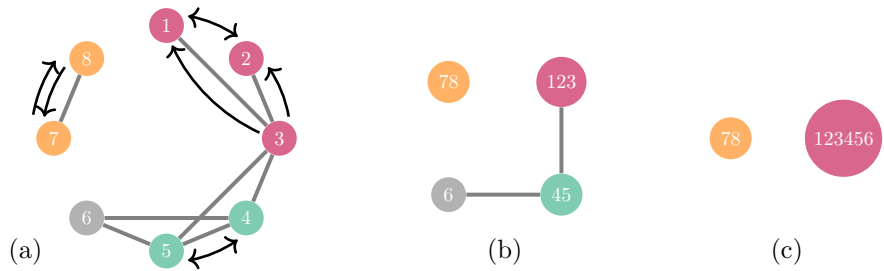
\begin{figure}[h]
\begin{center}
\begin{tikzpicture}[scale=0.75, transform shape]
  \node[circle, fill=purple!60] (1) at (0,2) {\textcolor{white}{1}};
  \node[circle, fill=purple!60] (2) at (1.414,1.414) {\textcolor{white}{2}};
  \node[circle, fill=purple!60] (3) at (2,0) {\textcolor{white}{3}};
  \node[circle, fill=green!60!blue!50] (4) at (1.414,-1.414) {\textcolor{white}{4}};
  \node[circle, fill=green!60!blue!50] (5) at (0,-2) {\textcolor{white}{5}};
  \node[circle, fill=gray!60] (6) at (-1.414,-1.414) {\textcolor{white}{6}};
  \node[circle, fill=orange!60] (7) at (-2,0) {\textcolor{white}{7}};
  \node[circle, fill=orange!60] (8) at (-1.414,1.414) {\textcolor{white}{8}};
  
  \draw[gray, line width=1.5] (1) -- (3);
  \draw[gray, line width=1.5] (2) -- (3);
  \draw[gray, line width=1.5] (3) -- (4);
  \draw[gray, line width=1.5] (3) -- (5);
  \draw[gray, line width=1.5] (4) -- (5);
  \draw[gray, line width=1.5] (4) -- (6);
  \draw[gray, line width=1.5] (5) -- (6);
  \draw[gray, line width=1.5] (7) -- (8);

  \draw[line width=1, <->] (0,0)+(55:2.2cm) arc (55:80:2.2cm);
  \draw[line width=1, <->] (0,0)+(280:2.2cm) arc (280:305:2.2cm);
  \draw[line width=1, ->] (0,0)+(10:2.2cm) arc (10:35:2.2cm);
  \draw[line width=1, <-] (3,3)+(205:3.2cm) arc (205:245:3.2cm);
  \draw[line width=1, ->] (0,0)+(145:2.2cm) arc (145:170:2.2cm);
  \draw[line width=1, <-] (0,0)+(145:2.4cm) arc (145:170:2.4cm);

  \node[circle, fill=gray!60] (1) at (5,-1) {\textcolor{white}{6}};
  \node[circle, fill=green!60!blue!50] (2) at (7,-1) {\textcolor{white}{45}};
  \node[circle, fill=purple!60] (3) at (7,1) {\textcolor{white}{123}};
  \node[circle, fill=orange!60] (4) at (5,1) {\textcolor{white}{78}};
  
  \draw[gray, line width=1.5] (1) -- (2);
  \draw[gray, line width=1.5] (2) -- (3);

  \node[circle, fill=purple!60] (3) at (12,0) {\textcolor{white}{123456}};
  \node[circle, fill=orange!60] (4) at (10,0) {\textcolor{white}{78}};

  \begin{scope}[scale=1.333,transform shape]
      \node at (-1.875,-1.5) {(a)};
      \node at (4.5,-1.5) {(b)};
      \node at (8.25,-1.5) {(c)};
  \end{scope}

\end{tikzpicture}
\end{center}
\caption{(a) An eight vertex graph with canonical foliage partition indicated by colors. The foliage is composed of all vertices except vertex 5. Twin relationships are indicated by $t_1\longleftrightarrow t_2$ and leaf-axil relationships by $l\longleftarrow a$. Notice the multiple overlapping relationships. (b) The corresponding canonical foliage graph $F(G)$. (c) The 2nd-foliage graph $F^2(G)$. For convenience, we write `123' in place of $\{1,2,3\}$ and `123456' in place of $\{\{1,2,3\},\{4,5\},\{6\} \}$.}
\label{fig:fpartition}
\end{figure}

\begin{theorem}[Proposition 7 in \cite{fpartition}]
    The canonical foliage partition is LC-invariant.\label{theo:fplci}
\end{theorem}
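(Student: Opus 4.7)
The plan is to invoke Theorem~\ref{theo:sequence}: any two LC-equivalent graphs are related by a finite sequence of local complementations, so it suffices to show that a single $\tau_a$ preserves the canonical foliage partition for arbitrary $a \in V$. By Proposition~\ref{prop:feer} the partition is determined by the relation $\sim_\text{F}$, so I would reduce the statement to proving the pairwise equivalence $v \sim_\text{F} w$ in $G$ iff $v \sim_\text{F} w$ in $\tau_a(G)$. Because $\tau_a = \tau_a^{-1}$, one direction is enough, and pairwise preservation of $\sim_\text{F}$ immediately implies equality of the equivalence classes, i.e., $F(V)$.

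The bulk of the work is a case analysis on the positions of $v$, $w$, and $a$. Recall that under $\tau_a$, each $b \in N_a$ has its neighborhood updated by $N_b \mapsto N_b \triangle (N_a \setminus \{b\})$, while vertices outside $N_a \cup \{a\}$ have their neighborhoods unchanged. I would split into the following regimes:
\begin{enumerate}[label=(\roman*)]
    \item $\{v, w\} \cap (N_a \cup \{a\}) = \emptyset$: neighborhoods unchanged, so all foliage conditions trivially carry over.
    \item Exactly one of $v,w$ is in $N_a$, and $a \notin \{v,w\}$.
    \item Both $v, w \in N_a$.
    \item $a \in \{v, w\}$.
\end{enumerate}
In each regime I would compute $N'_v$ and $N'_w$ explicitly and use the symmetric-difference identity $(A \triangle B) \setminus C = (A \setminus C) \triangle (B \setminus C)$ to verify that the twin condition $N_v \setminus \{w\} = N_w \setminus \{v\}$ (or the leaf-axil condition) survives, possibly after converting from a twin relation to a leaf-axil relation (or vice versa). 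The key leverage is that $\sim_\text{F}$ is a disjunction of four conditions, so I only need to exhibit \emph{one} of them in $\tau_a(G)$ given one in $G$.

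The main obstacle will be regimes (iii) and (iv), where the type of the foliage relation can change. For example, if $v = a$ and $\{v,w\} \in T_G$ with $v \in N_w$, then $w \in N_a$ and a direct computation gives $N'_w = \{v\}$, so the twin relation becomes a leaf-axil relation. Handling every such conversion cleanly (including the possibility that $v, w$ are twins with $v \notin N_w$, i.e.\ ``false twins") requires careful bookkeeping but no new ideas beyond symmetric-difference manipulation. A useful auxiliary fact, already established in \cite{limits}, is that the foliage as a \emph{set} is LC-invariant; this lets me discard certain degenerate subcases (e.g.\ neither vertex could suddenly leave the foliage) and concentrate on how the relationships between foliage vertices are rearranged.
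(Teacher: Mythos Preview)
Your proposal is correct and essentially mirrors the paper's proof: both reduce via Theorem~\ref{theo:sequence} to a single $\tau_a$, use $\tau_a=\tau_a^{-1}$ to handle only one direction, and verify that $v\sim_\text{F} w$ is preserved by a case analysis on the position of $a$ relative to $v,w$. The only cosmetic difference is that the paper organizes its cases by the \emph{type} of the relation first (twin versus leaf-axil) and then by the location of $a$, whereas you split by location first; the paper's ordering makes case~(ii) disappear for twins automatically (since $a\notin\{v,w\}$ and $N_v\setminus w=N_w\setminus v$ force $a\in N_v\iff a\in N_w$), and it never needs the auxiliary foliage-set invariance result from \cite{limits}.
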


Proposition \ref{prop:feer} and Theorem \ref{theo:fplci} are proven in the Appendix (ref. \cite{fpartition} proves these in greater generality).

The essence of Theorem \ref{theo:fplci} is that LCs transform twin, leaf-axil, and axil-leaf pairs between each other. Ref. \cite{fpartition} showed that every foliage-equivalence class is either a singleton, a star graph (an axil and its leaves), a totally connected set (clique) of twins, or a totally disconnected set (anticlique) of twins, and LCs transform between these types.

\begin{definition}
    Given a graph $G$ with some partition $F_W(V) = \{W_1,\dots,W_l\}$ of $V$, $F_W(V)$ is a \textbf{foliage partition} on $W$ if $F_W(V) \leq F(V) := F_V(V)$.
\end{definition}

Here, vertices within a partition subset are still equivalent to each other, but can also be equivalent to outside vertices. A foliage partition is ``LC-invariant" in the sense that vertices in partition subsets remain equivalent under LCs.

\begin{figure}[htb]
\centering
\begin{tikzpicture}[scale=0.64, transform shape]
    \node[circle, fill=purple!60] (1) at (-8, -1) {\textcolor{white}{1}};
    \node[circle, fill=purple!60] (2) at (-6.586, -1.586) {\textcolor{white}{2}};
    \node[circle, fill=purple!60] (3) at (-6, -3) {\textcolor{white}{3}};
    \node[circle, fill=green!60!blue!25] (4) at (-6.586, -4.414) {\textcolor{white}{4}};
    \node[circle, fill=green!60!blue!25] (5) at (-8, -5) {\textcolor{white}{5}};
    \node[circle, fill=gray!30] (6) at (-9.414, -4.414) {\textcolor{white}{6}};
    \node[circle, fill=orange!30] (7) at (-10, -3) {\textcolor{white}{7}};
    \node[circle, fill=orange!30] (8) at (-9.414, -1.586) {\textcolor{white}{8}};
    
    \draw[gray, line width=1.5] (1) -- (3);
    \draw[gray, line width=1.5] (2) -- (3);
    \draw[gray!50, line width=1.5] (3) -- (7);
    \draw[gray!50, line width=1.5] (3) -- (8);
    \draw[gray!50, line width=1.5] (4) -- (5);
    \draw[gray!50, line width=1.5] (4) -- (6);
    \draw[gray!50, line width=1.5] (5) -- (6);
    \draw[gray!50, line width=1.5] (4) -- (7);
    \draw[gray!50, line width=1.5] (4) -- (8);
    \draw[gray!50, line width=1.5] (5) -- (7);
    \draw[gray!50, line width=1.5] (5) -- (8);

    \node[circle, fill=purple!60] (1) at (-1, -1) {\textcolor{white}{1}};
    \node[circle, fill=purple!60] (2) at (0.414, -1.586) {\textcolor{white}{2}};
    \node[circle, fill=purple!60] (3) at (1, -3) {\textcolor{white}{3}};
    \node[circle, fill=green!60!blue!25] (4) at (0.414, -4.414) {\textcolor{white}{4}};
    \node[circle, fill=green!60!blue!25] (5) at (-1, -5) {\textcolor{white}{5}};
    \node[circle, fill=gray!30] (6) at (-2.414, -4.414) {\textcolor{white}{6}};
    \node[circle, fill=orange!30] (7) at (-3, -3) {\textcolor{white}{7}};
    \node[circle, fill=orange!30] (8) at (-2.414, -1.586) {\textcolor{white}{8}};
    
    \draw[gray, line width=1.5] (1) -- (2);
    \draw[gray, line width=1.5] (3) -- (2);
    \draw[gray!50, line width=1.5] (2) -- (7);
    \draw[gray!50, line width=1.5] (2) -- (8);
    \draw[gray!50, line width=1.5] (4) -- (5);
    \draw[gray!50, line width=1.5] (4) -- (6);
    \draw[gray!50, line width=1.5] (5) -- (6);
    \draw[gray!50, line width=1.5] (4) -- (7);
    \draw[gray!50, line width=1.5] (4) -- (8);
    \draw[gray!50, line width=1.5] (5) -- (7);
    \draw[gray!50, line width=1.5] (5) -- (8);

    \node[circle, fill=purple!60] (1) at (6, -1) {\textcolor{white}{1}};
    \node[circle, fill=purple!60] (2) at (7.414, -1.586) {\textcolor{white}{2}};
    \node[circle, fill=purple!60] (3) at (8, -3) {\textcolor{white}{3}};
    \node[circle, fill=green!60!blue!25] (4) at (7.414, -4.414) {\textcolor{white}{4}};
    \node[circle, fill=green!60!blue!25] (5) at (6, -5) {\textcolor{white}{5}};
    \node[circle, fill=gray!30] (6) at (4.586, -4.414) {\textcolor{white}{6}};
    \node[circle, fill=orange!30] (7) at (4, -3) {\textcolor{white}{7}};
    \node[circle, fill=orange!30] (8) at (4.586, -1.586) {\textcolor{white}{8}};
    
    \draw[gray!50, line width=1.5] (1) -- (7);
    \draw[gray!50, line width=1.5] (1) -- (8);
    \draw[gray!50, line width=1.5] (2) -- (7);
    \draw[gray!50, line width=1.5] (2) -- (8);
    \draw[gray!50, line width=1.5] (3) -- (7);
    \draw[gray!50, line width=1.5] (3) -- (8);
    \draw[gray!50, line width=1.5] (4) -- (5);
    \draw[gray!50, line width=1.5] (4) -- (6);
    \draw[gray!50, line width=1.5] (5) -- (6);
    \draw[gray!50, line width=1.5] (4) -- (7);
    \draw[gray!50, line width=1.5] (4) -- (8);
    \draw[gray!50, line width=1.5] (5) -- (7);
    \draw[gray!50, line width=1.5] (5) -- (8);
    \draw[gray!50, line width=1.5] (7) -- (8);
    \draw[gray, line width=1.5] (1) -- (2);
    \draw[gray, line width=1.5] (1) -- (3);
    \draw[gray, line width=1.5] (2) -- (3);

    \node[circle, fill=purple!60] (1) at (13, -1) {\textcolor{white}{1}};
    \node[circle, fill=purple!60] (2) at (14.414, -1.586) {\textcolor{white}{2}};
    \node[circle, fill=purple!60] (3) at (15, -3) {\textcolor{white}{3}};
    \node[circle, fill=green!60!blue!25] (4) at (14.414, -4.414) {\textcolor{white}{4}};
    \node[circle, fill=green!60!blue!25] (5) at (13, -5) {\textcolor{white}{5}};
    \node[circle, fill=gray!30] (6) at (11.586, -4.414) {\textcolor{white}{6}};
    \node[circle, fill=orange!30] (7) at (11, -3) {\textcolor{white}{7}};
    \node[circle, fill=orange!30] (8) at (11.586, -1.586) {\textcolor{white}{8}};
    
    \draw[gray!50, line width=1.5] (1) -- (8);
    \draw[gray!50, line width=1.5] (2) -- (8);
    \draw[gray!50, line width=1.5] (3) -- (8);
    \draw[gray!50, line width=1.5] (1) -- (4);
    \draw[gray!50, line width=1.5] (2) -- (4);
    \draw[gray!50, line width=1.5] (3) -- (4);
    \draw[gray!50, line width=1.5] (1) -- (5);
    \draw[gray!50, line width=1.5] (2) -- (5);
    \draw[gray!50, line width=1.5] (3) -- (5);
    \draw[gray!50, line width=1.5] (4) -- (6);
    \draw[gray!50, line width=1.5] (5) -- (6);
    \draw[gray!50, line width=1.5] (4) -- (8);
    \draw[gray!50, line width=1.5] (5) -- (8);
    \draw[gray!50, line width=1.5] (7) -- (8);
    
    \draw[line width=0.5, <->] (-1,-12)+(112:12cm) arc (112:68:12cm);
    \node at (-1,0.3) {$\tau_3$};
    \draw[<->] (2,-3)--(3,-3);
    \node at (2.5,-2.7) {$\tau_2$};
    \draw[<->] (9,-3)--(10,-3);
    \node at (9.5,-2.7) {$\tau_8$};

    \node[circle, fill=blue!15] (1) at (-8, -8) {\textcolor{white}{1}};
    \node[circle, fill=purple!60] (2) at (-6.586, -8.586) {\textcolor{white}{2}};
    \node[circle, fill=purple!60] (3) at (-6, -10) {\textcolor{white}{3}};
    \node[circle, fill=green!60!blue!25] (4) at (-6.586, -11.414) {\textcolor{white}{4}};
    \node[circle, fill=blue!60!purple!15] (5) at (-8, -12) {\textcolor{white}{5}};
    \node[circle, fill=gray!30] (6) at (-9.414, -11.414) {\textcolor{white}{6}};
    \node[circle, fill=orange!30] (7) at (-10, -10) {\textcolor{white}{7}};
    \node[circle, fill=orange!30] (8) at (-9.414, -8.586) {\textcolor{white}{8}};

    \draw[gray!50, line width=1.5] (1) -- (3);
    \draw[gray, line width=1.5] (2) -- (3);
    \draw[gray!50, line width=1.5] (3) -- (7);
    \draw[gray!50, line width=1.5] (3) -- (8);
    \draw[gray!50, line width=1.5] (4) -- (5);
    \draw[gray!50, line width=1.5] (4) -- (6);
    \draw[gray!50, line width=1.5] (5) -- (6);
    \draw[gray!50, line width=1.5] (4) -- (7);
    \draw[gray!50, line width=1.5] (4) -- (8);
    \draw[gray!50, line width=1.5] (5) -- (7);
    \draw[gray!50, line width=1.5] (5) -- (8);

    \node[circle, fill=blue!15] (1) at (-1, -8) {\textcolor{white}{1}};
    \node[circle, fill=purple!60] (2) at (0.414, -8.586) {\textcolor{white}{2}};
    \node[circle, fill=purple!60] (3) at (1, -10) {\textcolor{white}{3}};
    \node[circle, fill=green!60!blue!25] (4) at (0.414, -11.414) {\textcolor{white}{4}};
    \node[circle, fill=blue!60!purple!15] (5) at (-1, -12) {\textcolor{white}{5}};
    \node[circle, fill=gray!30] (6) at (-2.414, -11.414) {\textcolor{white}{6}};
    \node[circle, fill=orange!30] (7) at (-3, -10) {\textcolor{white}{7}};
    \node[circle, fill=orange!30] (8) at (-2.414, -8.586) {\textcolor{white}{8}};

    \draw[gray!50, line width=1.5] (1) -- (2);
    \draw[gray, line width=1.5] (3) -- (2);
    \draw[gray!50, line width=1.5] (2) -- (7);
    \draw[gray!50, line width=1.5] (2) -- (8);
    \draw[gray!50, line width=1.5] (4) -- (5);
    \draw[gray!50, line width=1.5] (4) -- (6);
    \draw[gray!50, line width=1.5] (5) -- (6);
    \draw[gray!50, line width=1.5] (4) -- (7);
    \draw[gray!50, line width=1.5] (4) -- (8);
    \draw[gray!50, line width=1.5] (5) -- (7);
    \draw[gray!50, line width=1.5] (5) -- (8);
    
    \node[circle, fill=blue!15] (1) at (6, -8) {\textcolor{white}{1}};
    \node[circle, fill=purple!60] (2) at (7.414, -8.586) {\textcolor{white}{2}};
    \node[circle, fill=purple!60] (3) at (8, -10) {\textcolor{white}{3}};
    \node[circle, fill=green!60!blue!25] (4) at (7.414, -11.414) {\textcolor{white}{4}};
    \node[circle, fill=blue!60!purple!15] (5) at (6, -12) {\textcolor{white}{5}};
    \node[circle, fill=gray!30] (6) at (4.586, -11.414) {\textcolor{white}{6}};
    \node[circle, fill=orange!30] (7) at (4, -10) {\textcolor{white}{7}};
    \node[circle, fill=orange!30] (8) at (4.586, -8.586) {\textcolor{white}{8}};

    \draw[gray!50, line width=1.5] (1) -- (2);
    \draw[gray!50, line width=1.5] (1) -- (3);
    \draw[gray, line width=1.5] (2) -- (3);
    \draw[gray!50, line width=1.5] (1) -- (7);
    \draw[gray!50, line width=1.5] (1) -- (8);
    \draw[gray!50, line width=1.5] (2) -- (7);
    \draw[gray!50, line width=1.5] (2) -- (8);
    \draw[gray!50, line width=1.5] (3) -- (7);
    \draw[gray!50, line width=1.5] (3) -- (8);
    \draw[gray!50, line width=1.5] (4) -- (5);
    \draw[gray!50, line width=1.5] (4) -- (6);
    \draw[gray!50, line width=1.5] (5) -- (6);
    \draw[gray!50, line width=1.5] (4) -- (7);
    \draw[gray!50, line width=1.5] (4) -- (8);
    \draw[gray!50, line width=1.5] (5) -- (7);
    \draw[gray!50, line width=1.5] (5) -- (8);
    \draw[gray!50, line width=1.5] (7) -- (8);

    \node[circle, fill=blue!15] (1) at (13, -8) {\textcolor{white}{1}};
    \node[circle, fill=purple!60] (2) at (14.414, -8.586) {\textcolor{white}{2}};
    \node[circle, fill=purple!60] (3) at (15, -10) {\textcolor{white}{3}};
    \node[circle, fill=green!60!blue!25] (4) at (14.414, -11.414) {\textcolor{white}{4}};
    \node[circle, fill=blue!60!purple!15] (5) at (13, -12) {\textcolor{white}{5}};
    \node[circle, fill=gray!30] (6) at (11.586, -11.414) {\textcolor{white}{6}};
    \node[circle, fill=orange!30] (7) at (11, -10) {\textcolor{white}{7}};
    \node[circle, fill=orange!30] (8) at (11.586, -8.586) {\textcolor{white}{8}};

    \draw[gray!50, line width=1.5] (1) -- (8);
    \draw[gray!50, line width=1.5] (2) -- (8);
    \draw[gray!50, line width=1.5] (3) -- (8);
    \draw[gray!50, line width=1.5] (1) -- (4);
    \draw[gray!50, line width=1.5] (2) -- (4);
    \draw[gray!50, line width=1.5] (3) -- (4);
    \draw[gray!50, line width=1.5] (1) -- (5);
    \draw[gray!50, line width=1.5] (2) -- (5);
    \draw[gray!50, line width=1.5] (3) -- (5);
    \draw[gray!50, line width=1.5] (4) -- (6);
    \draw[gray!50, line width=1.5] (5) -- (6);
    \draw[gray!50, line width=1.5] (4) -- (8);
    \draw[gray!50, line width=1.5] (5) -- (8);
    \draw[gray!50, line width=1.5] (7) -- (8);

    \draw[line width=0.5, <->] (-1,-19)+(112:12cm) arc (112:68:12cm);
    \node at (-1,-6.7) {$\tau_3$};
    \draw[<->] (2,-10)--(3,-10);
    \node at (2.5,-9.7) {$\tau_2$};
    \draw[<->] (9,-10)--(10,-10);
    \node at (9.5,-9.7) {$\tau_8$};
    
\end{tikzpicture}
\vspace{-100pt}
\caption{LC-invariance demonstrated for a partition subset of the canonical foliage partition and a finer foliage partition of the same graph. This also demonstrates that within a set of equivalent vertices, any leaf-axil or twin configuration can be obtained through LCs. Crucially, it is not necessary that the partition subset is maximal.}
\end{figure}
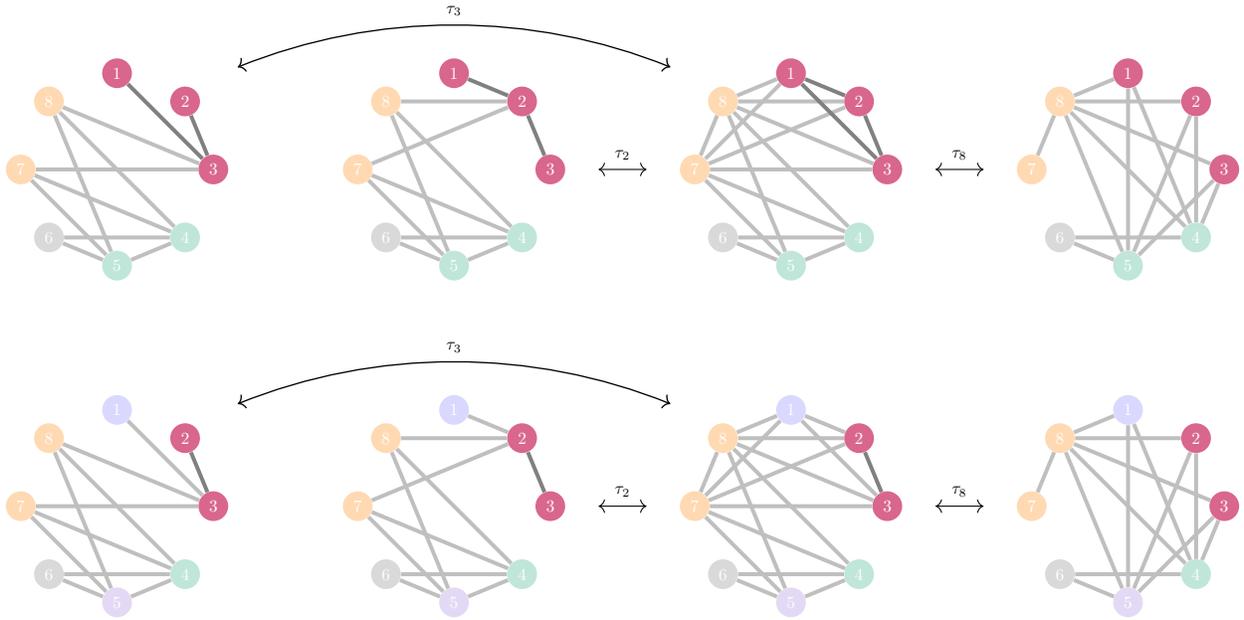

\subsection{Foliage Graph}

\begin{definition}
    Given a graph $G(V,E)$, its \textbf{canonical foliage graph} is $F(G)(F(V),F(E))$, where $\{V_i,V_j\}\in F(E) \iff \exists v_i\in V_i,v_j\in V_j$ such that $\{v_i,v_j\} \in E$.
\end{definition}

\begin{definition}
    Given a graph $G(V,E)$ with foliage partition $F_W(V)$, the \textbf{foliage graph} associated with $F_W(V)$ is $F_W(G)(F_W(V),F_W(E))$, where $\{W_i,W_j\}\in F_W(E) \iff \exists w_i\in W_i,w_j\in W_j$ such that $\{w_i,w_j\} \in E$.\footnote{Let $F(G) := F_V(G)$.}
\end{definition}

We can apply this graph operation $F_W$ indefinitely. For a graph $G$, call $F^n(G)$ the \textbf{\textit{n}th-foliage graph}. An example of the 2nd-foliage graph is given in Figure \ref{fig:fpartition}c.

\begin{theorem}[Lifted Local Complementation]
    Let graph $G$ have foliage partition $F_W(V)$. Then $F_W(\tau_a(G)) = \tau_{W_a}(F_W(G))$ for $a\in G$ with $\abs{N_a} > 1$ and corresponding $W_a\in F_W(G)$.
    \label{theo:liftedlc}
\end{theorem}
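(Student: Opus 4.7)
The plan is to show the vertex and edge sets of $F_W(\tau_a(G))$ and $\tau_{W_a}(F_W(G))$ coincide separately. For vertices, LC-invariance of the foliage partition (a direct generalization of Theorem~\ref{theo:fplci}, since each $W_i$ is contained in a canonical foliage class and therefore consists of pairwise foliage-equivalent vertices that remain so after $\tau_a$) gives that $F_W(V)$ is unchanged under $\tau_a$. All the work is in the edges.

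I would fix a pair of distinct subsets $\{W_i, W_j\}$ and compare their cross-edge bit on both sides, case-splitting on whether $W_a \in \{W_i, W_j\}$ and on whether each of $W_i, W_j$ lies in $N_{W_a}$. A preliminary step is to translate the foliage-graph condition ``$W_i \in N_{W_a}$'' into the $G$-level condition ``$W_i \cap N_a \neq \emptyset$'' (for $W_i \neq W_a$), using the classification of canonical foliage classes (star, clique-twin, anticlique-twin) together with the hypothesis $|N_a|>1$. This hypothesis is essential: in the star case it forces $a$ to be the axil of $V_a$ rather than a leaf, so that the neighbors of $a$ outside $V_a$ coincide with the external neighbors of the entire class. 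The two easy cases---edges incident to $W_a$, and edges with at least one endpoint in a class disjoint from $N_a$---then follow from the observation that $\tau_a$ fixes every edge with an endpoint outside $N_a$ and, by uniform external structure within $W_a$, does not disturb the cross-edge bit of edges incident to $W_a$.

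The main obstacle is the remaining case: $W_i, W_j \neq W_a$ with both in $N_{W_a}$, where $\tau_{W_a}$ toggles $\{W_i, W_j\}$. To match this on the $G$ side, I would prove a uniformity lemma: (a) every cross-edge between $W_i$ and $W_j$ in $G$ has both endpoints in $N_a$---leaves of a star class carry no cross-class edges, and each twin class lies uniformly inside or outside $N_a$---and (b) if any such cross-edge exists, then every pair in $(W_i \cap N_a) \times (W_j \cap N_a)$ is in fact connected, again by uniformity of twin external neighborhoods and the singleton nature of the axil's external role. Together (a) and (b) imply that $\tau_a$ flips the cross-edge bit between $W_i$ and $W_j$ exactly when both lie in $N_{W_a}$, matching the action of $\tau_{W_a}$. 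The most delicate subcase is when $V_i$ or $V_j$ is a star whose axil is the unique element of $W_i \cap N_a$ (or $W_j \cap N_a$); here the $|N_a|>1$ hypothesis, now applied on the $W_a$ side, is what makes the axil-based correspondence close.
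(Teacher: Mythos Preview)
Your argument is correct: the case split on the position of $W_i,W_j$ relative to $W_a$ and $N_{W_a}$, together with the uniformity lemma (a)--(b), does establish that $\tau_a$ flips the cross-edge bit between $W_i$ and $W_j$ precisely when $\tau_{W_a}$ does. The one place you should tighten is the interaction between $W$-classes lying inside the \emph{same} canonical class $V_a$. Your justification of (a) via ``leaves of a star class carry no cross-class edges'' is phrased at the level of canonical classes, but cross-$W$-edges can live entirely inside a single $V_k$; you need the extra sentence that if $W_i\subseteq V_a$ with $V_a$ a star and $a$ the axil, then $W_i$ consists of leaves and carries no edges except to $a\in W_a$, so either $W_j=W_a$ (excluded) or the cross-edge set is empty and (a) is vacuous. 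With that patch the case analysis closes.

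Your route is genuinely different from the paper's. The paper does not argue directly at all: it invokes Observation~3 of \cite{fpartition}, where lifted local complementation is proved for the canonical foliage partition $F=F_V$, and then remarks that the same proof goes through for any finer $F_W$ because the argument never uses maximality of the foliage classes. So the paper's ``proof'' is a one-line reduction to an external source plus an observation about what that source's argument does not require. What your approach buys is self-containment and an explicit account of why the finer partition causes no trouble---you actually track the subsets $W_i\subseteq V_i$ through the structural classification rather than asserting that an unseen proof generalizes. What the paper's approach buys is brevity and an honest acknowledgment that the heavy lifting was already done in \cite{fpartition}; in particular it avoids re-deriving the neighborhood-uniformity facts that you need for (a) and (b), since those are exactly the content of the structural classification in that reference.
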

\begin{proof}
    This result follows from Observation 3 in \cite{fpartition}, where lifted local complementation was proven for $F_W=F_V=F$. The general case is identical, since we do not need non-equivalence of vertices in different partition subsets.
\end{proof}

Lifted local complementation allows us to easily perform LCs on a foliage graph. As the theorem suggests, if two graphs are LC-equivalent, their canonical foliage graphs are LC-equivalent. If we choose the same foliage partition, then the corresponding foliage graphs are also LC-equivalent.

The key observation of this section is that we do not need to take the coarsest partition with equivalent vertices. The power of this is that we can perform lifted local complementation on many choices of a foliage graph.

\section{Foliage and Vertex-Minors}

In order to use foliage graphs to simplify the vertex-minor problem, we would like to know how LCs and vertex deletions affect the foliage graph structure. LCs behave conveniently according to lifted local complementation, but vertex deletion turns out to be more troublesome.

We begin by stating an existing method for simplifying the vertex-minor problem, which we will refer to as \textit{source reduction}.

\begin{theorem}[Source Reduction; Theorem 2.7 in \cite{algo}]
    Given graphs $G(V,E)$ and $G'(V',E')$ with vertex $v \in V \setminus V'$ and $G'$ containing no isolated vertices,
    \begin{enumerate}[label=(\alph*)]
        \item If $v$ is a twin or leaf, then $G'<G \iff G'<G \setminus v$.
        \item If $v$ is an axil with some corresponding leaf $w$, then $G'<G \iff G'<(\tau_w \circ \tau_v(G) \setminus v)$.
    \end{enumerate}\label{theo:sreduction}
\end{theorem}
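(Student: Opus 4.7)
The direction $\Leftarrow$ of both (a) and (b) is immediate from Proposition \ref{prop:vminortransitive}: in (a), $G\setminus v = Z_v(G) < G$, while in (b), $\tau_w\circ\tau_v(G)\setminus v$ is obtained from $G$ by local complementations and a vertex deletion, so is a vertex-minor of $G$; transitivity of vertex-minor closes the argument in each case.

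For the forward direction of (a), I would use Theorem \ref{theo:preduction} to express $G'\sim_\text{LC} (P_{v_1}\circ\cdots\circ P_{v_k})(G)$ with $v \in \{v_1,\dots,v_k\} = V\setminus V'$. Since Pauli measurements on distinct qubits commute as operators on tensor-factored Hilbert spaces, the sequence can be reordered to place $v$'s measurement first: $G'\sim_\text{LC} Q(P_v(G))$ for some Pauli sequence $Q$ on the remaining vertices. The main task is then to show that, up to LC-equivalence and modulo isolated vertices that cannot persist to $G'$, we have $P_v(G) \sim_\text{LC} G\setminus v$ for any $P_v \in \{X_v,Y_v,Z_v\}$, so that $Q$ can be adapted to $G\setminus v$ to witness $G' < G\setminus v$. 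For $v$ a leaf with unique neighbor $a$: $Z_v(G) = G\setminus v$, and since $\tau_v$ trivially complements within $N_v = \{a\}$, also $Y_v(G) = G\setminus v$. The $X_v$ case is subtler: direct expansion of $X_v(G) = \tau_a\tau_v\tau_a(G)\setminus v$ shows that $a$ becomes isolated while edges within $N_a\setminus\{v\}$ are preserved, so $X_v(G)$ differs from $G\setminus v$ only by removing the edges $\{a,m\}$ for $m\in N_a\setminus\{v\}$. If $a\in V'$, the isolated $a$ contradicts the no-isolated-vertex hypothesis on $G'$, ruling out $P_v=X_v$; if $a \in V\setminus V'$, then $a$ is also measured, and one checks that $Z_a$ applied to $X_v(G)$ equals $Z_a$ applied to $G\setminus v$ (both giving $G\setminus\{v,a\}$), allowing us to replace $P_v=X_v$ with $P_v=Z_v$. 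For $v$ a twin with partner $u$, the twin symmetry $N_v\setminus\{u\}=N_u\setminus\{v\}$ yields analogous identifications; for instance, for connected twins with $b=u$ in the $X_v$ definition, a direct computation gives $\tau_u\tau_v\tau_u(G) = G$, so $X_v(G) = G\setminus v$, and for disconnected twins, $\tau_u$ applied to $G\setminus v$ recovers $Y_v(G)$.

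For the forward direction of (b), I would reduce to (a) by a role-swap argument. A direct computation of $\tau_w\circ\tau_v(G)$, where $w$ is the leaf corresponding to the axil $v$, shows that $v$ becomes a leaf with axil $w$: $\tau_v$ first extends $w$'s neighborhood to $N_v$ and toggles edges within $N_v\setminus\{w\}$, then $\tau_w$ disconnects $v$ from $N_v\setminus\{w\}$ while restoring those internal edges, leaving $\{v,w\}$ as $v$'s sole remaining edge. Since $\tau_w\circ\tau_v(G)\sim_\text{LC} G$ by Theorem \ref{theo:sequence}, the vertex-minor relation is preserved: $G'<G \iff G' < \tau_w\circ\tau_v(G)$. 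Applying part (a) to the now-leaf $v$ in $\tau_w\circ\tau_v(G)$ gives $G' < \tau_w\circ\tau_v(G)\setminus v$, as desired.

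The main obstacle is the case analysis in part (a), particularly the $X$-measurement on a leaf (where the axil becomes isolated, and the no-isolated-vertex hypothesis on $G'$ is precisely what is needed to absorb or rule out the discrepancy) and the twin subcases (connected versus disconnected, together with the choice of the auxiliary neighbor $b$ in the $X_v$ definition). Once these explicit graph-level identifications are in hand, the proof assembles via Theorems \ref{theo:preduction} and \ref{theo:sequence} plus transitivity.
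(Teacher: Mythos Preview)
The paper does not prove this theorem; it is imported from \cite{algo} (Theorem 2.7 there) and used as a black box, with only the passing remark that ``(b) is identical to (a) with the leaf and axil swapped.'' So there is no in-paper argument to compare yours against.

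Your strategy is the standard one and is sound: the backward direction via Proposition~\ref{prop:vminortransitive}, the forward direction of (a) via Theorem~\ref{theo:preduction} followed by reordering so that $v$ is measured first and a case split on $P_v$, and (b) reduced to (a) by the $\tau_w\circ\tau_v$ swap that turns the axil into a leaf (exactly the observation the paper makes informally). Two points deserve care if you write this out in full. First, the reordering step is justified at the quantum level because projective measurements on disjoint tensor factors commute, but the graph operations $X_a,Y_a,Z_a$ as defined in the paper do not literally commute as maps on graphs; they commute only up to LC-equivalence, and for $X$ the auxiliary neighbor may change after reordering. So the correct conclusion is $G'\sim_\text{LC} Q'(P_v(G))$ for some possibly different Pauli sequence $Q'$, not the same $Q$. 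Second, your twin case analysis is only sketched: the disconnected-twin $X_v$ subcase with a general auxiliary neighbor $b\neq u$ still needs an explicit computation (or an appeal to the fact that all choices of $b$ yield LC-equivalent results). Neither point is a genuine obstruction; the argument goes through once these are made precise.
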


Notice that (b) is identical to (a) with the leaf and axil swapped. This result is a convenient way to reduce vertex-minor problems by removing vertices in the foliage.

In order to work with foliage graphs in the context of vertex-minors, we define a way to represent $F_W(G)$ on the same set of vertices as $G$.

\begin{definition}
    Let $G(V,E)$ be a graph with foliage partition $F_W(V)=\{W_1,\dots,W_k\}$ and foliage graph $F_W(G)$. Given a set of representatives $R = \{w_1,\dots,w_k\}$ with each $w_i \in W_i$, denote $F_{W,R}(G) \cong F_W(G)$ to be a \textbf{foliage graph labeling} with respect to $R$, where each vertex $W_i$ is labeled as $w_i$. 
\end{definition}

\begin{theorem}
    Given a graph $G(V,E)$ with foliage partition $F_W(V)=\{W_1,\dots,W_k\}$ and any choice of representatives $R = \{w_1,\dots,w_k\}$, we have $F_{W,R}(G) < G$. Succinctly, $F_W(G) < G$. \label{theo:fgraphvminor}
\end{theorem}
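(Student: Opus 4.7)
I would proceed by induction on $|V \setminus R|$, peeling off one non-representative vertex at a time. The base case $|V \setminus R| = 0$ is immediate: every $W_i$ is the singleton $\{w_i\}$, so $F_{W,R}(G) = G$. For the inductive step I pick any $v \in W_i$ with $v \neq w_i$; the goal is to apply LCs followed by a single vertex deletion to remove $v$ while keeping all the other representatives intact, then invoke the inductive hypothesis on the smaller graph.

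The central technical move is to use LCs to reshape the canonical equivalence class $V_j \supseteq W_i$ into a star whose axil is the chosen representative $w_i$, so that $v$ becomes a leaf whose unique neighbor is $w_i$. This uses the classification mentioned after Theorem \ref{theo:fplci}: each equivalence class is a singleton, a star, a twin-clique, or a twin-anticlique, with LCs moving freely between these configurations. Concretely, $\tau_a$ at the axil of a star yields a twin-clique, and $\tau_{w_i}$ within a twin-clique yields a star with $w_i$ as axil; a twin-anticlique must first be complemented at an external common neighbor to become a twin-clique. Call the resulting LC-equivalent graph $G'$. In $G'$, vertex $v$ has $N_v = \{w_i\}$, and this sole edge lies entirely inside $W_i$.

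Now I delete $v$ to obtain $G' \setminus v$. Because the deleted edge was internal to $W_i$, the partition $W^{-} := \{W_1, \dots, W_i \setminus \{v\}, \dots, W_k\}$ is still a foliage partition of $G' \setminus v$ (the remaining vertices of $W_i$ sit in the same equivalence class, one leaf short of a star), and $F_{W^{-}, R}(G' \setminus v) = F_{W, R}(G')$ as labeled graphs on $R$, since no between-subset edge has been disturbed. By the inductive hypothesis, $F_{W^{-}, R}(G' \setminus v) < G' \setminus v$. Lifted local complementation (Theorem \ref{theo:liftedlc}) gives $F_{W,R}(G') \sim_{\text{LC}} F_{W,R}(G)$, and concatenating
\[
F_{W,R}(G) \;\sim_{\text{LC}}\; F_{W,R}(G') \;=\; F_{W^{-}, R}(G' \setminus v) \;<\; G' \setminus v \;<\; G' \;\sim_{\text{LC}}\; G
\]
together with transitivity of the vertex-minor relation (Proposition \ref{prop:vminortransitive}) yields $F_{W,R}(G) < G$. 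The step that demands the most care is the LC reshaping of $V_j$: one must verify, for every configuration type (and especially the twin-anticlique case, which requires an auxiliary LC at an \emph{external} common neighbor before the internal $\tau_{w_i}$ can take effect), that a sequence of LCs turns $V_j$ into a star with the prescribed axil $w_i$. This is the one place where the detailed equivalence-class classification of \cite{fpartition} is needed, rather than just Theorem \ref{theo:fplci}.
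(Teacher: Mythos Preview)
Your argument is correct, but it is organized quite differently from the paper's. The paper gives a direct, non-inductive construction: it processes each partition block $W_i$ in place according to its type. For a twin-clique or twin-anticlique it simply deletes every non-representative twin (no LCs at all); for a star it performs $\tau_{w_i}\circ\tau_a$ to swap the axil to $w_i$ and then deletes the remaining leaves, arguing directly that the external subgraph is complemented twice and hence unchanged. Because every step leaves between-block adjacencies intact, the end result is literally $F_{W,R}(G)$ on the nose, with no appeal to lifted local complementation.

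Your version trades this case economy for uniformity: you always force the ambient equivalence class into a star with axil $w_i$ (requiring an auxiliary LC at an external common neighbor in the anticlique case), delete one leaf, and recurse. The cost is that the external LC disturbs the rest of the graph, so you need Theorem~\ref{theo:liftedlc} to certify $F_{W,R}(G')\sim_{\text{LC}}F_{W,R}(G)$ and then transitivity to glue the chain together. This is fine, and the checks you flag (that all LCs used satisfy $|N_a|>1$, and that $W^{-}$ remains a foliage partition after the leaf deletion) go through. The paper's route is shorter because it avoids external LCs entirely for the twin cases and hence never needs lifted LC in the proof; your route is arguably more systematic and makes explicit how the general foliage-graph machinery absorbs whatever collateral damage the LCs cause.
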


\begin{proof}
    For each partition subset $W_i$, we define a process for reducing $W_i$ to $w_i$ without changing the overall foliage graph (see Figure \ref{fig:fgraphvminor}). For each type of $W_i$ given by Proposition 1 in ref. \cite{fpartition}:
    \begin{enumerate}[label=(\arabic*)]
        \item Single vertex: $W_i = \{w_i\}$ is already in the correct form.
        \item Star graph: Let $a\in W_i$ be the unique axil. Then $\tau_{w_i}\circ \tau_{a}(G)$ swaps the positions of $w_i$ and $a$ (which does nothing if $w_i$ is already the axil). The rest of the graph is unchanged, as the subgraph induced by $N_a \setminus W_i$ is complemented twice. Deleting every leaf of $w_i$, we obtain a single $w_i$ in place of $W_i$, where $N_{w_i}=N_a$.
        \item Fully connected graph: Delete every twin of $w_i$, which leaves the rest of the graph unchanged. The neighbors of vertices in $W_i$ are exactly the neighbors of $w_i$.
        \item Fully disconnected graph: Same as above.
    \end{enumerate}
    Since the edges between partition subsets are preserved, this process results in the foliage graph $F_{W,R}(G) < G$ on vertices $R=\{w_1,\dots,w_k\}$.
\end{proof}

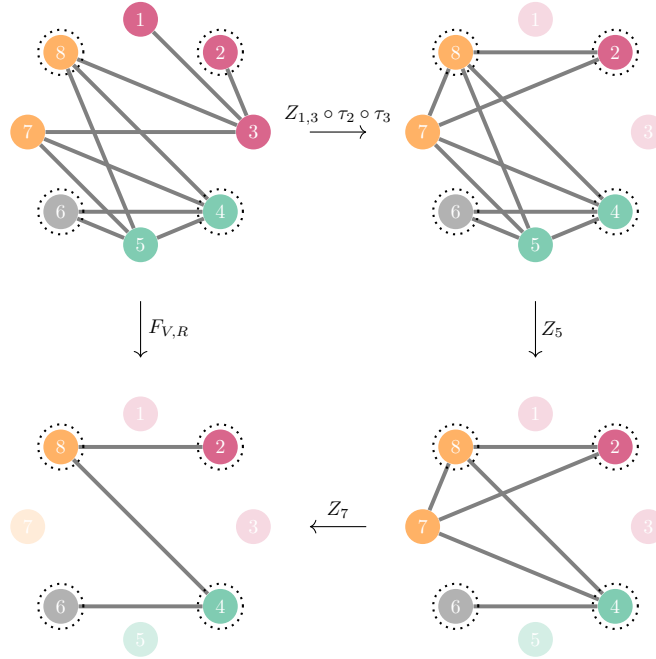
\begin{figure}[h]
\centering
\begin{tikzpicture}[scale=0.75, transform shape]
    \node[circle, fill=purple!60] (1) at (0,2) {\textcolor{white}{1}};
    \node[circle, fill=purple!60] (2) at (1.414,1.414) {\textcolor{white}{2}};
    \node[circle, fill=purple!60] (3) at (2,0) {\textcolor{white}{3}};
    \node[circle, fill=green!60!blue!50] (4) at (1.414,-1.414) {\textcolor{white}{4}};
    \node[circle, fill=green!60!blue!50] (5) at (0,-2) {\textcolor{white}{5}};
    \node[circle, fill=gray!60] (6) at (-1.414,-1.414) {\textcolor{white}{6}};
    \node[circle, fill=orange!60] (7) at (-2,0) {\textcolor{white}{7}};
    \node[circle, fill=orange!60] (8) at (-1.414,1.414) {\textcolor{white}{8}};
    
    \draw[gray, line width=1.5] (1) -- (3);
    \draw[gray, line width=1.5] (2) -- (3);
    \draw[gray, line width=1.5] (3) -- (7);
    \draw[gray, line width=1.5] (3) -- (8);
    \draw[gray, line width=1.5] (4) -- (5);
    \draw[gray, line width=1.5] (4) -- (6);
    \draw[gray, line width=1.5] (5) -- (6);
    \draw[gray, line width=1.5] (4) -- (7);
    \draw[gray, line width=1.5] (4) -- (8);
    \draw[gray, line width=1.5] (5) -- (7);
    \draw[gray, line width=1.5] (5) -- (8);
    
    \draw[dotted,thick] (2) circle (0.4);
    \draw[dotted,thick] (4) circle (0.4);
    \draw[dotted,thick] (6) circle (0.4);
    \draw[dotted,thick] (8) circle (0.4);
    
    \node[circle, fill=purple!15] (1) at (7,2) {\textcolor{white}{1}};
    \node[circle, fill=purple!60] (2) at (8.414,1.414) {\textcolor{white}{2}};
    \node[circle, fill=purple!15] (3) at (9,0) {\textcolor{white}{3}};
    \node[circle, fill=green!60!blue!50] (4) at (8.414,-1.414) {\textcolor{white}{4}};
    \node[circle, fill=green!60!blue!50] (5) at (7,-2) {\textcolor{white}{5}};
    \node[circle, fill=gray!60] (6) at (5.586,-1.414) {\textcolor{white}{6}};
    \node[circle, fill=orange!60] (7) at (5,0) {\textcolor{white}{7}};
    \node[circle, fill=orange!60] (8) at (5.586,1.414) {\textcolor{white}{8}};
    
    \draw[gray, line width=1.5] (2) -- (7);
    \draw[gray, line width=1.5] (2) -- (8);
    \draw[gray, line width=1.5] (4) -- (5);
    \draw[gray, line width=1.5] (4) -- (6);
    \draw[gray, line width=1.5] (5) -- (6);
    \draw[gray, line width=1.5] (4) -- (7);
    \draw[gray, line width=1.5] (4) -- (8);
    \draw[gray, line width=1.5] (5) -- (7);
    \draw[gray, line width=1.5] (5) -- (8);
    \draw[gray, line width=1.5] (7) -- (8);
    
    \draw[dotted,thick] (2) circle (0.4);
    \draw[dotted,thick] (4) circle (0.4);
    \draw[dotted,thick] (6) circle (0.4);
    \draw[dotted,thick] (8) circle (0.4);
    
    \node[circle, fill=purple!15] (1) at (7,-5) {\textcolor{white}{1}};
    \node[circle, fill=purple!60] (2) at (8.414,-5.586) {\textcolor{white}{2}};
    \node[circle, fill=purple!15] (3) at (9,-7) {\textcolor{white}{3}};
    \node[circle, fill=green!60!blue!50] (4) at (8.414,-8.414) {\textcolor{white}{4}};
    \node[circle, fill=green!60!blue!17] (5) at (7,-9) {\textcolor{white}{5}};
    \node[circle, fill=gray!60] (6) at (5.586,-8.414) {\textcolor{white}{6}};
    \node[circle, fill=orange!60] (7) at (5,-7) {\textcolor{white}{7}};
    \node[circle, fill=orange!60] (8) at (5.586,-5.586) {\textcolor{white}{8}};
    
    \draw[gray, line width=1.5] (2) -- (7);
    \draw[gray, line width=1.5] (2) -- (8);
    \draw[gray, line width=1.5] (4) -- (6);
    \draw[gray, line width=1.5] (4) -- (7);
    \draw[gray, line width=1.5] (4) -- (8);
    \draw[gray, line width=1.5] (7) -- (8);
    
    \draw[dotted,thick] (2) circle (0.4);
    \draw[dotted,thick] (4) circle (0.4);
    \draw[dotted,thick] (6) circle (0.4);
    \draw[dotted,thick] (8) circle (0.4);
    
    \node[circle, fill=purple!15] (1) at (0,-5) {\textcolor{white}{1}};
    \node[circle, fill=purple!60] (2) at (1.414,-5.586) {\textcolor{white}{2}};
    \node[circle, fill=purple!15] (3) at (2,-7) {\textcolor{white}{3}};
    \node[circle, fill=green!60!blue!50] (4) at (1.414,-8.414) {\textcolor{white}{4}};
    \node[circle, fill=green!60!blue!17] (5) at (0,-9) {\textcolor{white}{5}};
    \node[circle, fill=gray!60] (6) at (-1.414,-8.414) {\textcolor{white}{6}};
    \node[circle, fill=orange!15] (7) at (-2,-7) {\textcolor{white}{7}};
    \node[circle, fill=orange!60] (8) at (-1.414,-5.586) {\textcolor{white}{8}};
    
    \draw[gray, line width=1.5] (2) -- (8);
    \draw[gray, line width=1.5] (4) -- (6);
    \draw[gray, line width=1.5] (4) -- (8);
    
    \draw[dotted,thick] (2) circle (0.4);
    \draw[dotted,thick] (4) circle (0.4);
    \draw[dotted,thick] (6) circle (0.4);
    \draw[dotted,thick] (8) circle (0.4);
    
    \draw[->] (3,0)--(4,0);
    \node at (3.5,0.3) {$Z_{1,3} \circ \tau_2 \circ \tau_3$};
    \draw[->] (7,-3)--(7,-4);
    \node at (7.3,-3.5) {$Z_5$};
    \draw[->] (4,-7)--(3,-7);
    \node at (3.5,-6.7) {$Z_7$};
    \draw[->] (0,-3)--(0,-4);
    \node at (0.5,-3.5) {$F_{V,R}$};

\end{tikzpicture}
\caption{Process in Theorem \ref{theo:fgraphvminor} illustrated for extracting $F_{V,R}(G)$ from an eight vertex graph, with representatives $R=\{2,4,6,8\}$ circled. Actions on partition subsets are shown for a star graph (vertices 1, 2, 3), fully disconnected graph (vertices 4, 5) and fully connected graph (vertices 7, 8). Nothing is done for single vertex 6.}
\label{fig:fgraphvminor}
\end{figure}

Theorem \ref{theo:fgraphvminor} leads to a more general formulation of source reduction, which reveals the underlying foliage structure.

\begin{theorem}[Foliage Source Reduction]
    Let $G(V,E)$ and $G'(V',E')$ be graphs with $F_{W,R}(G)$ defined on labels $R$ and $V'\subset R \subset V$. If $G'$ has no isolated vertices, then $G'<G \iff G'<F_{W,R}(G)$.
\end{theorem}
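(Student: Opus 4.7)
The backward direction is immediate from Theorem \ref{theo:fgraphvminor} and Proposition \ref{prop:vminortransitive}: since $F_{W,R}(G) < G$ unconditionally, any $G' < F_{W,R}(G)$ automatically gives $G' < G$ by transitivity of the vertex-minor relation.

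For the forward direction, I would iterate Source Reduction (Theorem \ref{theo:sreduction}) to successively delete the vertices of $V \setminus R$ from $G$ while preserving $G'$ as a vertex-minor of the evolving source graph. The vertices of $V \setminus R$ are exactly the non-representative elements of each foliage-equivalence class $W_i$, and by Proposition 1 in \cite{fpartition} each $W_i$ is a singleton, a star, a clique of twins, or an anticlique of twins. The non-representatives within $W_i$ are therefore graph-theoretic leaves, graph-theoretic twins, or, in the star case when $w_i$ is itself a leaf, the axil of the star --- precisely the cases that Theorem \ref{theo:sreduction} permits us to remove. Mirroring the construction in the proof of Theorem \ref{theo:fgraphvminor}, I would process $W_i$ as follows: for a star class with $w_i$ equal to the axil, or for any twin class, repeatedly apply part (a) of Theorem \ref{theo:sreduction} to delete the non-representatives; for a star class whose representative $w_i$ is a leaf, first apply (a) to strip away the other leaves of the axil $a$, and then apply (b) with $(a,w_i)$ as the (axil, leaf) pair to delete $a$ via the swap $\tau_{w_i}\circ \tau_a$. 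Because $G'$ has no isolated vertices throughout (which is the standing hypothesis), each invocation of Theorem \ref{theo:sreduction} is valid, and by the same edge-counting in the proof of Theorem \ref{theo:fgraphvminor} the resulting source graph after all classes are processed is exactly $F_{W,R}(G)$, giving $G' < F_{W,R}(G)$.

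The main obstacle is verifying that the leaf/twin/axil structure inside a given $W_i$ really persists under iterated deletion. After removing one leaf of an axil $a$, the remaining leaves still have degree $1$, so they are still leaves; and for pairwise twins $t_1,t_2,t_3$ in a clique or anticlique class, the identity $N_{t_i}\setminus\{t_j\} = N_{t_j}\setminus\{t_i\}$ is preserved when any single $t_k$ is deleted, since the twins share the same external neighborhood and are either all mutually adjacent or all mutually non-adjacent. The remaining worry --- that the swap $\tau_{w_i}\circ \tau_a$ used in the star-with-leaf-representative case might corrupt the other partition subsets --- is resolved because the complementations touch only edges among $N_a \cup N_{w_i}$, every edge strictly outside $W_i$ is flipped exactly twice, and no new adjacencies between previously disjoint classes are introduced. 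This independence allows the class-by-class processing to be carried out sequentially, so one can conclude that the final source graph is precisely $F_{W,R}(G)$.
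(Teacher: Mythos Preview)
Your proposal is correct and follows essentially the same approach as the paper: both arguments observe that the reduction $G \to F_{W,R}(G)$ in Theorem~\ref{theo:fgraphvminor} is a composition of local complementations and deletions of leaves or twins lying outside $V'$, so each step preserves the vertex-minor relation as an if-and-only-if via Theorem~\ref{theo:sreduction} (and LC-invariance). Your treatment differs only cosmetically---you split off the backward direction via transitivity, reorder the star-case operations (strip leaves first, then swap out the axil), and spell out the persistence of twin/leaf structure under iterated deletion---but the underlying mechanism is identical to the paper's two-line proof.
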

\begin{proof}
    Since the process in the proof of Theorem \ref{theo:fgraphvminor} only involves LCs (clearly, $G'<G \iff G'<\tau(G)$) and deletions of leaves and twins, we can apply Theorem \ref{theo:sreduction}. We require $V' \subset R$ to ensure that vertices in the target graph are not deleted in the above process.
\end{proof}

\begin{lemma}
    Given graphs $G'<G$ with a set of pairwise foliage-equivalent vertices $A \subset G$, the vertices in $A \cap G'$ are either pairwise foliage-equivalent to each other, all isolated vertices, or the empty set.\label{lemma:fgec}
\end{lemma}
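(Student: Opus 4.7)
I plan to induct on the length of a Pauli-measurement sequence realizing $G'<G$. By Theorem~\ref{theo:preduction}, there exist $P_{v_1},\dots,P_{v_k}$ on the vertices $V\setminus V'$ with $G' \sim_{\text{LC}} G_k$, where $G_0=G$ and $G_j=P_{v_j}(G_{j-1})$. Write $A_j = A\cap V(G_j) = A\setminus\{v_1,\dots,v_j\}$. Because local complementations preserve the canonical foliage partition (Theorem~\ref{theo:fplci}) and cannot alter an isolated vertex (since $\tau_u$ only flips edges inside $G[N_u]$, and an isolated vertex lies in no $N_u$), the conclusion for $A\cap G'$ will follow from the corresponding conclusion for $A_k$ in $G_k$.

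The inductive claim is that $A_j$ in $G_j$ is pairwise foliage-equivalent, all isolated, or empty. The base case is the hypothesis on $A$. For the inductive step, decompose $P_{v_{j+1}}$ into its LC part followed by a single deletion of $v_{j+1}$. The LCs preserve both foliage-equivalence and isolation, so only the deletion needs analysis. If $A_j$ was empty or all isolated, the deletion at most removes one more vertex of $A$, and the outcome is still empty or all isolated.

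The substantive case is when $A_j$ is pairwise foliage-equivalent in the post-LC graph $\tilde{G}_j$. Then $A_j$ lies inside a single foliage-equivalence class $C$, which by the classification recalled from ref.~\cite{fpartition} is a singleton, a star, a clique of twins, or an anticlique of twins. I will do a short case split on the type of $C$ together with whether $v_{j+1}$ lies in $A_j$, in $C\setminus A_j$, or outside $C$. In each case the twin and leaf-axil relations among the vertices of $A_j\setminus\{v_{j+1}\}$ survive a single deletion, via the elementary observations that a clique (resp. anticlique) of twins minus any vertex is still a clique (resp. anticlique) of twins, and that leaves of an axil $a$ remain leaves so long as $a$ survives. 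Pairwise foliage-equivalence fails in exactly two configurations: the last shared external neighbor of an anticlique of twins is removed, or the axil of a star in $\tilde{G}_j$ is removed; in both, the remaining vertices of $A_j\setminus\{v_{j+1}\}$ have empty neighborhoods and so are all isolated.

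The main bookkeeping obstacle will be the star sub-case where $v_{j+1}$ is the axil: different choices of $X$, $Y$, or $Z$ measurement produce different types of $C$ in $\tilde{G}_j$ (for instance, $Y$ first applies $\tau_{v_{j+1}}$, converting the star into a clique of twins so the leaves stay pairwise foliage-equivalent after deletion, while $Z$ deletes directly and strands the leaves as isolated vertices). Since the inductive hypothesis permits either outcome, every branch of the case analysis closes. Applying the LCs connecting $G_k$ to $G'$ then transfers the conclusion to $A\cap G'$.
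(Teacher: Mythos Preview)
Your proposal is correct and follows essentially the same approach as the paper: induct along a realizing sequence, note that local complementations preserve both pairwise foliage-equivalence (Theorem~\ref{theo:fplci}) and isolation, and then case-split a single vertex deletion according to whether the deleted vertex is the axil of a star, the last shared neighbor of a disconnected twin set, or something harmless. The only cosmetic difference is that the paper inducts directly on the LC-and-deletion sequence from the definition of vertex-minor, whereas you route through Theorem~\ref{theo:preduction} and then unpack each Pauli measurement into LCs plus a deletion; this detour is unnecessary but not wrong, and your ``bookkeeping obstacle'' paragraph dissolves once you simply analyze the deletion in the post-LC graph $\tilde G_j$ without tracking which Pauli basis produced it.
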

\begin{proof}
    We show that any LC or vertex deletion can only send the vertices in $W_i$ to one of the three descriptions.

    As in Theorem \ref{theo:fplci}, LCs on a set of equivalent vertices leaves them equivalent. LCs cannot change a set of isolated vertices or the empty set. Vertex deletion inside a set of equivalent vertices either 1) results in a set of isolated vertices, if an axil is deleted, 2) results in the empty set, if there is 1 vertex, or 3) does nothing, otherwise. Vertex deletion in a neighbor of this set either 1) results in a set of isolated vertices, if the set of equivalent vertices is totally disconnected and share only one neighbor, or 2) does nothing, otherwise.

    So inductively, the vertices in $A \cup G'$ follow one of the three descriptions.
\end{proof}

While sets of equivalent vertices maintain some relation under vertex deletion, nonequivalent vertices may become equivalent (consider deletion at vertex 1 on a line graph; vertices 2 and 3 become equivalent). While this lemma shows that much of the structure of the foliage graph is retained, a vertex-minor graph can have a different canonical foliage graph.

\subsection{Target Reduction Theorem}

Source reduction only allows us to delete vertices not in the target graph. We can allow for the other case through a process we call \textit{target reduction}. This reduction turns out to be more restrictive.

\begin{theorem}[Target Reduction]
    Let $G(V,E)$ and $G'(V',E')$ be graphs with $G' < G$ and vertices $v \sim_\text{F} w$ in both $G$ and $G'$.
    \begin{enumerate}[label=(\arabic*)]
        \item If $(v,w) \in L_G$ or $\{v,w\} \in T_G$, let $\tilde{G} = G \setminus v$.
        
        \item If $(w,v) \in L_G$, let $\tilde{G} = \tau_w \circ \tau_v(G) \setminus v$.
        
        \item If $(v,w) \in L_{G'}$ or $\{v,w\} \in T_{G'}$, let $\tilde{G'} = G' \setminus v$.
        
        \item If $(w,v) \in L_{G'}$, let $\tilde{G'} = \tau_w \circ \tau_v(G') \setminus v$.
    \end{enumerate}
    Then $\tilde{G'} < \tilde{G}$.
    \label{theo:treduction}
\end{theorem}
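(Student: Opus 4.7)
The plan is to reduce, via LC preprocessing, to the case where $v$ is a leaf or twin (rather than an axil) in both $G$ and $G'$, then combine transitivity of the vertex-minor relation (Proposition \ref{prop:vminortransitive}) with source reduction (Theorem \ref{theo:sreduction}). Cases (2) and (4) of the statement carry out precisely this preprocessing, for $G$ and $G'$ respectively.

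First I would verify that in case (2), the LC $\tau_w \circ \tau_v$ converts the axil--leaf pair $(v,w)$ in $G$ into a leaf--axil pair with $v$ as the leaf. A direct neighborhood computation suffices: $\tau_v$ complements $G[N_v]$, so $w$ acquires all of $v$'s other neighbors and becomes a twin of $v$ with common neighborhood $N_v$; then $\tau_w$ flips the edges of $v$ to those same neighbors, reducing $v$ to a leaf with axil $w$. The same argument applies in case (4) on $G'$. Defining $G_{\text{pre}}$ to be $G$ in case (1) or $\tau_w \circ \tau_v(G)$ in case (2), and $G'_{\text{pre}}$ analogously from cases (3)/(4), we have $G_{\text{pre}} \sim_{\text{LC}} G$ and $G'_{\text{pre}} \sim_{\text{LC}} G'$. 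Since LC-equivalent graphs are mutual vertex-minors, Proposition \ref{prop:vminortransitive} yields $G'_{\text{pre}} < G_{\text{pre}}$; moreover, by construction, $v$ is a leaf or twin in both.

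With this setup the remainder is short. Vertex deletion gives $\tilde{G'} = G'_{\text{pre}} \setminus v < G'_{\text{pre}}$, and transitivity with $G'_{\text{pre}} < G_{\text{pre}}$ yields $\tilde{G'} < G_{\text{pre}}$. Since $v$ is a leaf or twin in $G_{\text{pre}}$ and $v \notin V(\tilde{G'})$, source reduction (Theorem \ref{theo:sreduction}) concludes $\tilde{G'} < G_{\text{pre}} \setminus v = \tilde{G}$, as desired. The main technical obstacle is the hypothesis of Theorem \ref{theo:sreduction} that the target $\tilde{G'}$ contain no isolated vertices: this can fail in the degenerate case when $\{v,w\}$ forms an isolated edge in $G'_{\text{pre}}$, leaving $w$ isolated in $\tilde{G'}$. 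In such cases one would handle the isolated component separately, using that $v$ is a leaf of $w$ in $G_{\text{pre}}$ to reproduce an isolated $w$ in $\tilde{G}$ by additionally deleting $w$'s remaining neighbors, while reconstructing the rest of $\tilde{G'}$ from the inherited vertex-minor process on the other components.
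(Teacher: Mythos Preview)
Your approach is genuinely different from the paper's and, where it applies, considerably slicker: you treat source reduction (Theorem~\ref{theo:sreduction}) and transitivity (Proposition~\ref{prop:vminortransitive}) as black boxes, whereas the paper builds an explicit commuting diagram. It takes the witnessing sequence $f_1,\dots,f_m$ for $G'<G$, applies the two-element foliage reduction $F_{W_l,R_l}$ (collapsing only $\{v,w\}$) at every intermediate stage, and verifies via lifted local complementation (Theorem~\ref{theo:liftedlc}) that a parallel sequence $g_1,\dots,g_m$ of LCs and deletions carries $\tilde G$ to $\tilde{G'}$. Lemma~\ref{lemma:fgec} is used to guarantee $v\sim_\text{F} w$ persists throughout, so the reduction is defined at each step.

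The gap you flag at the end is not a technicality but the heart of the matter. Theorem~\ref{theo:sreduction} genuinely fails when the target has isolated vertices (the paper's Figure~\ref{fig:trlimits}(b) is exactly a counterexample), and $\tilde{G'}$ acquires an isolated $w$ precisely when $\{v,w\}$ is an isolated edge of $G'$ --- which is the situation in the paper's principal application (Proposition~\ref{prop:line2}, where $\tilde{G'}=K_2\sqcup\{b\}$). Your proposed patch does not close this: ``deleting $w$'s remaining neighbors'' in $\tilde G$ to isolate $w$ may delete vertices of $V'$, and ``reconstructing the rest of $\tilde{G'}$ from the inherited vertex-minor process on the other components'' presumes you can disentangle the action on $w$'s neighborhood from the rest of the sequence --- but that requires tracking $v,w$ through every intermediate graph, which is exactly the paper's explicit construction. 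Moreover, the theorem as stated allows $G'$ itself to carry isolated vertices unrelated to $v,w$; your patch does not address that case at all. In short, the clean transitivity-plus-source-reduction argument proves the theorem under the additional hypothesis that $\tilde{G'}$ has no isolated vertices; removing that hypothesis appears to force something equivalent to the paper's sequence-level argument.
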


\begin{proof}
    From $G'<G$ we have a sequence of local complementations and vertex deletions $f_1, \dots, f_m$ with $f_m \circ \dots \circ f_1 (G) = G'$ yielding intermediate graphs $G=G_0, G_1, \dots, G_m=G'$, as shown below.
    
    We use the graph operation $F_{W_l,R_l}$ to send each $G_l$ to a $G'_l$, where $W_l$ contains every singleton except $\{v,w\} \in W_l$, and $R_l = V(G_l) \setminus v$. To show that this is well-defined, we need $v \sim_{\text{F}} w$ for every intermediate graph. We use Lemma \ref{lemma:fgec} inductively beginning with $G_1 < G$ and $A = \{v,w\}$. Then $A \cap G_1 = \{v,w\}$ since $v$ and $w$ can't have been deleted. So $v$ and $w$ must either be foliage-equivalent or isolated vertices. But since $G, G_1, \dots, G'$ is a sequence resulting in $v \sim_\text{F} w$ in $G'$, they cannot become isolated. So $v \sim_\text{F} w$ in $G_1$.

    \begin{center}
        \begin{tikzcd}[column sep=1.5cm, row sep=1.5cm]
        G \arrow{r}{f_1} \arrow[swap]{d}{F_{W_0,R_0}} & G_1 \arrow{r}{f_2} \arrow[swap]{d}{F_{W_1,R_1}} & G_2 \arrow{r}{f_3} \arrow[swap]{d}{F_{W_2,R_2}} & \cdots \arrow{r}{f_m} & G' \arrow[swap]{d}{F_{W_m,R_m}} \\
        \tilde{G} \arrow{r}{g_1} & G'_1 \arrow{r}{g_2} & G'_2 \arrow{r}{g_3} & \cdots \arrow{r}{g_m} & \tilde{G'}
    \end{tikzcd}
    \end{center}

    We have $G'_0 = \tilde{G}$ and $G'_m = \tilde{G'}$, since $F_{W_0,R_0}$ and $F_{W_m,R_m}$ correspond to the deletions described in the theorem statement.
    
    We assert that the sequence $\tilde{G} = G'_0, G'_1, \dots, G'_m = \tilde{G'}$ is obtained by local complementations and vertex deletions. For each $f_l$ we define $g_l$ on this sequence as follows (in each case, $a \in V_i$):
    \begin{align}
        g_l=\begin{cases}
            \tau_a & \text{if } f_l=\tau_a, a \neq v, \abs{N_a^{(G_{l-1})}}>1. \\
            \tau_w & \text{if } f_l=\tau_v, \abs{N_v^{(G_{l-1})}}>1. \\
            Z_a & \text{if } f_l = Z_a. \\
            I &\text{otherwise.}
        \end{cases}\label{eq:repops}
    \end{align}

    The first two essentially describe lifted local complementation. The composition of all $g_l$ is a sequence of local complementations and vertex deletions. We want to show that $g_l$ do indeed send $G'_{l-1}$ to $G'_l$. That is, we want $g_l\circ F_{W_{l-1},R_{l-1}}(G_{l-1}) = F_{W_l,R_l} \circ f_l (G_{l-1})$. Going through each operation in \eqref{eq:repops},
    
    \begin{enumerate}[label=(\arabic*)]
        \item $\tau_a \circ F_{W_{l-1},R_{l-1}}(G_{l-1}) \stackrel{?}{=} F_{W_l,R_l} \circ \tau_a (G_{l-1})$: Since no vertices are deleted by $\tau_a$, $F_{W_{l-1},R_{l-1}} = F_{W_l,R_l}$. So equality holds by Theorem \ref{theo:liftedlc}, since each $a$ corresponds to itself in the foliage graph.
        
        \item $\tau_w \circ F_{W_{l-1},R_{l-1}}(G_{l-1}) \stackrel{?}{=} F_{W_l,R_l} \circ \tau_v (G_{l-1})$: Same as above; $v$ corresponds to $w$ in the foliage graph.
        
        \item $Z_a \circ F_{W_{l-1},R_{l-1}}(G_{l-1}) \stackrel{?}{=} F_{W_l,R_l} \circ Z_a(G_{l-1})$: Since the foliage graph reduction only changes $v$, $w$, and incident edges, we only have to check when $a$ is a neighbor of $v$ or $w$. If $v$ is a leaf of $w$ or they are twins, then in the LHS, $N_w^{(G'_l)} = N_w^{(G_{l-1})} \setminus v \setminus a$, and in the RHS, $N_w^{(G'_l)} = N_w^{(G_{l-1})} \setminus a \setminus v$. If $w$ is a leaf of $v$, then in the LHS, $N_w^{(G'_l)} = N_v^{(G_{l-1})} \setminus w \setminus a$, and in the RHS, $N_w^{(G'_l)} = N_v^{(G_{l-1})} \setminus a \setminus w$.
        
        \item $F_{W_{l-1},R_{l-1}}(G_{l-1}) \stackrel{?}{=} F_{W_l,R_l} (G_{l-1})$: The only remaining case is if $f_l = \tau_a$ and $\abs{N_a} \leq 1$. Then $F_{W_{l-1},R_{l-1}} = F_{W_l,R_l}$ and equality holds.
    \end{enumerate}

    So $g_1, \dots, g_m$ is a sequence of local complementations and vertex deletions such that $g_m \circ \dots \circ g_1 (\tilde{G}) = \tilde{G'}$, yielding $\tilde{G'}<\tilde{G}$.
\end{proof}

\begin{figure}[h]
\begin{center}
\begin{tikzpicture}[scale=0.75, transform shape]
    \node[circle, fill=purple!60, inner sep=2pt] (1) at (-2,-6.5) {\textcolor{white}{$a_1$}};
    \node[circle, fill=purple!60, inner sep=2pt] (2) at (-0.5,-6.5) {\textcolor{white}{$a_2$}};
    \node[circle, fill=purple!60, minimum size=17pt] (3) at (1,-6.5) {};
    \node[circle, fill=orange!60, inner sep=1.6pt] (4) at (2.5,-5.75) {\textcolor{white}{$b_1$}};
    \node[circle, fill=orange!60, inner sep=1.6pt] (5) at (2.5,-7.25) {\textcolor{white}{$b_2$}};

    \draw[gray, line width=1.5] (1) -- (2);
    \draw[gray, line width=1.5] (2) -- (3);
    \draw[gray, line width=1.5] (3) -- (4);
    \draw[gray, line width=1.5] (3) -- (5);

    \draw[gray!50, line width=1, <->] (-1.25,-8) +(75:2cm) arc (75:105:2cm);
    \draw[gray!50, line width=1, <->] (1,-6.5) +(-15:2cm) arc (-15:15:2cm);

    \filldraw[gray!10] (-4,-6.5) circle (1.1);
    \node[red,scale=3] at (-4,-6) {\ding{55}};
    \node[circle, fill=purple!60, inner sep=2pt] (1) at (-4.25,-6.75) {};
    \node[circle, fill=purple!60, inner sep=2pt] (2) at (-4.25,-7.25) {};
    \node[circle, fill=orange!60, inner sep=2pt] (3) at (-3.75,-6.75) {};
    \node[circle, fill=orange!60, inner sep=2pt] (4) at (-3.75,-7.25) {};
    \draw[gray, line width=1.5] (1) -- (2);
    \draw[gray, line width=1.5] (3) -- (4);

    \node[circle, fill=purple!60, inner sep=2pt] (1) at (-2,-9.5) {\textcolor{white}{$a_1$}};
    \node[circle, fill=purple!60, inner sep=2pt] (2) at (-0.5,-9.5) {\textcolor{white}{$a_2$}};
    \node[circle, fill=purple!60, minimum size=17pt] (3) at (1,-9.5) {};
    \node[circle, fill=orange!60, inner sep=1.6pt] (4) at (2.5,-9.5) {\textcolor{white}{$b_1$}};

    \draw[gray, line width=1.5] (1) -- (2);
    \draw[gray, line width=1.5] (2) -- (3);
    \draw[gray, line width=1.5] (3) -- (4);

    \draw[gray!50, line width=1, <->] (-1.25,-11) +(75:2cm) arc (75:105:2cm);
    \draw[gray!50, line width=1, ->] (2.5,-8.6) to (2.5,-9.1);

    \filldraw[gray!10] (-4,-9.5) circle (1.1);
    \node[green!70!blue!70,scale=3] at (-4,-9) {\ding{51}};
    \node[circle, fill=purple!60, inner sep=2pt] (1) at (-4.25,-9.75) {};
    \node[circle, fill=purple!60, inner sep=2pt] (2) at (-4.25,-10.25) {};
    \node[circle, fill=orange!60, inner sep=2pt] (3) at (-3.75,-10) {};
    \draw[gray, line width=1.5] (1) -- (2);

    \draw[->] (0.25,-7.5) -- (0.25,-8.5);
    \node at (0.5,-8) {T};

    \node[circle, fill=purple!60, inner sep=2pt] (1) at (9,-3.5) {\textcolor{white}{$a_1$}};
    \node[circle, fill=purple!60, inner sep=2pt] (2) at (10.5,-3.5) {\textcolor{white}{$a_2$}};
    \node[circle, fill=purple!60, minimum size=17pt] (3) at (12,-3.5) {};
    \node[circle, fill=orange!60, inner sep=1.6pt] (4) at (13.5,-3.5) {\textcolor{white}{$b_1$}};
    \node[circle, fill=orange!60, inner sep=1.6pt] (5) at (15,-3.5) {\textcolor{white}{$b_2$}};
    
    \draw[gray, line width=1.5] (1) -- (2);
    \draw[gray, line width=1.5] (2) -- (3);
    \draw[gray, line width=1.5] (3) -- (4);
    \draw[gray, line width=1.5] (4) -- (5);
    
    \draw[gray!50, line width=1, <->] (9.75,-5)+(75:2cm) arc (75:105:2cm);
    \draw[gray!50, line width=1, <->] (14.25,-5)+(75:2cm) arc (75:105:2cm);
    
    \filldraw[gray!10] (7,-3.5) circle (1.1);
    \node[green!70!blue!70,scale=3] at (7,-3) {\ding{51}};
    \node[circle, fill=purple!60, inner sep=2pt] (1) at (6.75,-3.75) {};
    \node[circle, fill=purple!60, inner sep=2pt] (2) at (6.75,-4.25) {};
    \node[circle, fill=orange!60, inner sep=2pt] (3) at (7.25,-3.75) {};
    \node[circle, fill=orange!60, inner sep=2pt] (4) at (7.25,-4.25) {};
    \draw[gray, line width=1.5] (1) -- (2);
    \draw[gray, line width=1.5] (3) -- (4);

    \node[circle, fill=purple!60, inner sep=2pt] (1) at (9,-6.5) {\textcolor{white}{$a_1$}};
    \node[circle, fill=purple!60, inner sep=2pt] (2) at (10.5,-6.5) {\textcolor{white}{$a_2$}};
    \node[circle, fill=purple!60, minimum size=17pt] (3) at (12,-6.5) {};
    \node[circle, fill=orange!60, inner sep=1.6pt] (4) at (13.5,-6.5) {\textcolor{white}{$b_1$}};
    
    \draw[gray, line width=1.5] (1) -- (2);
    \draw[gray, line width=1.5] (2) -- (3);
    \draw[gray, line width=1.5] (3) -- (4);

    \draw[gray!50, line width=1, <->] (9.75,-8)+(75:2cm) arc (75:105:2cm);
    \draw[gray!50, line width=1, ->] (13.5,-5.6) to (13.5,-6.1);
    
    \filldraw[gray!10] (7,-6.5) circle (1.1);
    \node[green!70!blue!70,scale=3] at (7,-6) {\ding{51}};
    \node[circle, fill=purple!60, inner sep=2pt] (1) at (6.75,-6.75) {};
    \node[circle, fill=purple!60, inner sep=2pt] (2) at (6.75,-7.25) {};
    \node[circle, fill=orange!60, inner sep=2pt] (3) at (7.25,-7) {};
    \draw[gray, line width=1.5] (1) -- (2);

    \draw[->] (11.25,-4.5) -- (11.25,-5.5);
    \node at (11.5,-5) {T};

    \node[circle, fill=purple!60, inner sep=2pt] (1) at (9,-9.5) {\textcolor{white}{$a_1$}};
    \node[circle, fill=purple!60, inner sep=2pt] (2) at (10.5,-9.5) {\textcolor{white}{$a_2$}};
    \node[circle, fill=orange!60, inner sep=1.6pt] (3) at (12,-9.5) {\textcolor{white}{$b_1$}};
    
    \draw[gray, line width=1.5] (1) -- (2);
    \draw[gray, line width=1.5] (2) -- (3);

    \draw[gray!50, line width=1, <->] (9.75,-11)+(75:2cm) arc (75:105:2cm);
    \draw[gray!50, line width=1, ->] (12,-8.6) to (12,-9.1);
    
    \filldraw[gray!10] (7,-9.5) circle (1.1);
    \node[red,scale=3] at (7,-9) {\ding{55}};
    \node[circle, fill=purple!60, inner sep=2pt] (1) at (6.75,-9.75) {};
    \node[circle, fill=purple!60, inner sep=2pt] (2) at (6.75,-10.25) {};
    \node[circle, fill=orange!60, inner sep=2pt] (3) at (7.25,-10) {};
    \draw[gray, line width=1.5] (1) -- (2);

    \draw[->] (11.25,-7.5) -- (11.25,-8.5);
    \node at (11.5,-8) {S};

    \begin{scope}[scale=1.333,transform shape]
      \node at (-4.5,-7.75) {(a)};
      \node at (3.75,-7.75) {(b)};
    \end{scope}
\end{tikzpicture}
\end{center}
\caption{Limitations of target reduction demonstrated using Bell pair extraction (see Section \ref{sec:bellvm}). (a) Unlike in source reduction, the inverse of target reduction does not hold. We have (top) $K_2 \sqcup K_2 \nless G$, but target reduction with $b_1 \sim_\text{F} b_2$ gives (bottom) $K_2 \sqcup \{b_1\} < G \setminus \{b_2\}$. (b) Target reduction may introduce isolated vertices into the target graph, disallowing source reduction. We have (top) $K_2 \sqcup K_2 < G$, and target reduction with $b_1 \sim_\text{F} b_2$ gives (middle) $K_2 \sqcup \{b_1\} < G \setminus \{b_2\}$. But then source reduction on the unlabeled vertex gives (bottom) an impossible three-vertex extraction.}
\label{fig:trlimits}
\end{figure}
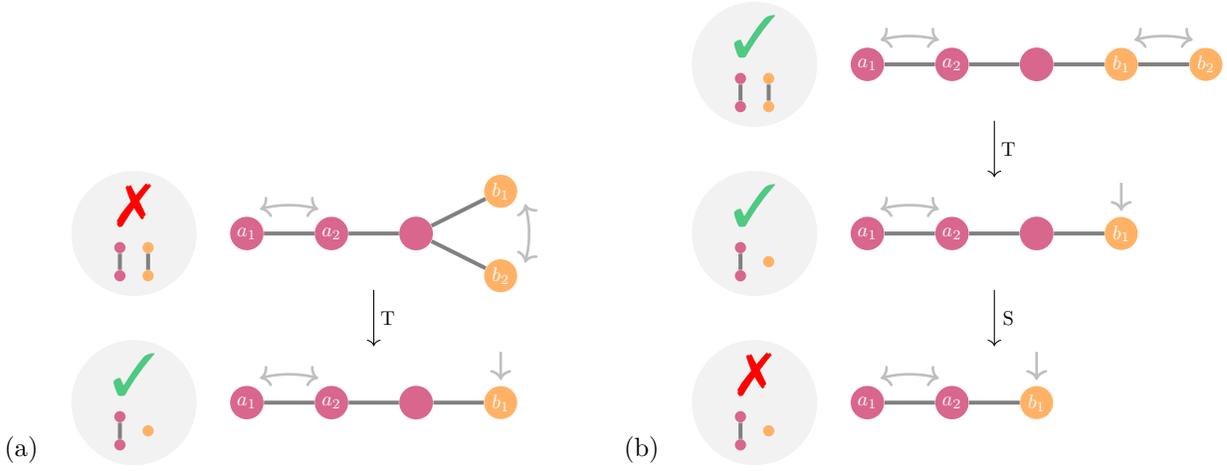

We can define foliage target reduction analogously to foliage source reduction.

\begin{theorem}[Foliage Target Reduction]
    Let $G(V,E)$ and $G'(V',E')$ be graphs with $G'<G$ and $F_{W',R'}(G')$ defined on labels $R'$. Let $F_W(V) = F_{W'}(V') \cup \{ \{v\} \mid v \in V \setminus V' \}$ and $R = R' \cup (V \setminus V')$. If $F_W(V)$ is a foliage partition, then $F_{W',R'}(G')<F_{W,R}(G)$.
\end{theorem}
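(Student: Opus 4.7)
The plan is to iterate Target Reduction (Theorem \ref{theo:treduction}) to carry out, simultaneously on both $G$ and $G'$, exactly the vertex-by-vertex procedure of Theorem \ref{theo:fgraphvminor} that produces the respective foliage graph labelings. Since each application of target reduction preserves the vertex-minor relation, the conclusion $F_{W',R'}(G') < F_{W,R}(G)$ will follow once the iteration is shown to be legal at every step.

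First I identify the vertices to be deleted. Because $R = R' \cup (V \setminus V')$, we have $V \setminus R = V' \setminus R'$; these non-representative vertices lie entirely in the non-singleton partition subsets $W_i = W'_i \subset V'$. For each such $v$ I pair it with $w = w_i$, the representative of its subset. The assumption that $F_W(V)$ is a foliage partition of $G$ gives $v \sim_\text{F} w_i$ in $G$, while $F_{W'}(V')$ being a foliage partition of $G'$ gives $v \sim_\text{F} w_i$ in $G'$. Hence Target Reduction applies, and the source-side cases (1)/(2) and target-side cases (3)/(4) may be chosen independently, according to the (possibly different) leaf/axil/twin types of the pair in each graph.

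Next I process one $W_i$ at a time. For twin subsets in either graph, case (1) or (3) suffices: repeatedly delete twins of $w_i$. For star subsets, if $w_i$ is not the axil I first apply case (2) or (4) with $v = a$ (the axil) and $w = w_i$, which performs $\tau_{w_i} \circ \tau_a$ and then deletes $a$; the swap promotes $w_i$ to the axil, after which the remaining leaves can be removed one by one with case (1) or (3). In every instance the source-side and target-side operations coincide with those used in the proof of Theorem \ref{theo:fgraphvminor}, so upon clearing $W_i$ we have reduced it to the single labeled vertex $w_i$ in both graphs, with all external adjacencies preserved.

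The main obstacle is the bookkeeping for the induction. I must check (a) that after each step the still-unprocessed vertices of the current $W_i$ remain mutually foliage-equivalent in both reduced graphs, and (b) that partition subsets $W_j$ not yet processed still lie inside a single foliage-equivalence class of each reduced graph. For (a), the explicit action of target reduction leaves a residual star or twin cluster on one fewer vertex (the swap in case (2)/(4) makes $w_i$ the new axil before deleting the old axil, leaving the other leaves intact as leaves of $w_i$). For (b), the operations inside $W_i$ leave the rest of the graph untouched --- in the star case the subgraph on $N_a \setminus W_i$ is complemented twice, and twin deletions inside $W_i$ affect no vertex outside $W_i$ --- exactly as noted in the proof of Theorem \ref{theo:fgraphvminor}. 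Once every $W_i$ has been reduced to its representative, the source graph equals $F_{W,R}(G)$ and the target equals $F_{W',R'}(G')$, which together with the vertex-minor relation preserved at each step yields $F_{W',R'}(G') < F_{W,R}(G)$.
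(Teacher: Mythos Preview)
Your proposal is correct and follows the same strategy as the paper: iterate Theorem \ref{theo:treduction} over the non-representative vertices of each $W_i$, pairing each with $w_i$, and verify via your bookkeeping (a) and (b) that both sides collapse to their respective foliage graph labelings. The only imprecision is the phrase that the target-side operations ``coincide with those used in the proof of Theorem \ref{theo:fgraphvminor}''---since the deletion order is dictated by the source side, the axil of $W'_i$ in $G'$ may be removed at a different point than Theorem \ref{theo:fgraphvminor}'s natural order for that type---but your checks (a) and (b) handle this correctly, exactly as the paper does via its separate case analysis on the type of $W'_i$ in $G'$.
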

\begin{proof}
    Since the process in the proof of Theorem \ref{theo:fgraphvminor} for $F_{W,R}(G) < G$ only involves LCs and deletions of leaves and twins in $V'$ (all other partition subsets are singletons in $V \setminus V'$), we can apply Theorem \ref{theo:treduction} to obtain $F_{W',R'}(G')<F_{W,R}(G)$ from $G'<G$.
    
    It remains to show that the actions on $G'$ do yield $F_{W',R'}(G')$. First note that the partition subsets of $F_{W'}(V')$ remain sets of equivalent vertices under (3) and (4) of Theorem \ref{theo:treduction}. By the proof of Lemma \ref{lemma:fgec}, a partition subset can only become nonequivalent if an axil is deleted. However, only leaves and twins are deleted in Theorem \ref{theo:treduction}.

    Now consider each partition subset $W'_i$ in $F_{W'}(V')$. For each type of $W'_i$ given by Proposition 1 in ref. \cite{fpartition}:
    \begin{enumerate}[label=(\arabic*)]
        \item Single vertex: Already the correct form; nothing is done to the corresponding partition $W_i=W'_i$ in $G$.
        \item Star graph: If the representative element $w'_i$ is the axil, then all of the leaves are deleted by (3) of Theorem \ref{theo:treduction}. Otherwise, when the axil is deleted by (4) of Theorem \ref{theo:treduction}, $w'_i$ becomes the axil, yielding the same result.
        \item Fully connected graph: All vertices except the representative element are deleted.
        \item Fully disconnected graph: Same as above.
    \end{enumerate}
    Since the edges between foliage partition subsets are preserved, this process results in the foliage graph $F_{W',R'}(G')$ on vertices $R'$.
\end{proof}

In short, foliage target reduction allows us to reduce sets of equivalent vertices which are also equivalent in the target graph.

\section{Bell Vertex-Minor}\label{sec:bellvm}

To determine whether simultaneous extraction of two Bell pairs is possible, we want to know if $K_2\sqcup K_2 = G'(\{a_1,a_2,b_1,b_2\}, \{\{a_1,a_2\},\{b_1,b_2\}\}) < G$ given graph $G$ and vertices $a_1,a_2,b_1,b_2 \in G$. In this section we use our tools, target reduction in particular, to solve the \textbf{Bell vertex-minor problem} on line, tree, and ring graphs.

\subsection{Line and Tree Graphs}

The \textbf{line graph} is defined as $L_n(V_n, E_n)$ with $V_n = \{1,\dots, n\}$ and $E_n=\{\{1,2\}, \{2,3\}, \dots, \{n-1,n\}\}$. Theorem 3 in ref. \cite{limits} showed that $G(\{a_1,b_1,a_2,b_2\}, \{\{a_1,a_2\},\{b_1,b_2\}\}) \nless L_n$ when $a_1<b_1<a_2<b_2$. This result is one of three types of bottlenecks (defined simply as when extraction is impossible) on line graphs we will identify. We will solve the line graph completely and generalize the result to trees.
 
\begin{lemma}
    It is not possible to extract a Bell pair $\{a_1,a_2\}$ and an isolated\footnote{Referring to an isolated, unmeasured qubit.} qubit $b$ from $L_n$ if $a_1 < b < a_2$.\label{lemma:line1a}
\end{lemma}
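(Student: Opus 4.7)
The plan is to argue by strong induction on $n$, with Lemma 4.1 providing the central obstruction and Theorem 2.7 letting us realize each reduction step as a Pauli measurement on a non-target vertex. The base case $n=3$ forces $a_1=1$, $b=2$, $a_2=3$; here the foliage class $\{1,2,3\}$ of $L_3$ intersects the target as itself, yet in $K_2\sqcup\{b\}$ vertices $1$ and $3$ are a leaf-axil pair while $2$ is isolated, so the intersection is neither pairwise foliage-equivalent, nor all isolated, nor empty---contradicting Lemma 4.1.

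For the inductive step at $n\ge 4$, I would first pre-filter via Lemma 4.1 applied to the foliage classes $\{1,2\}$ and $\{n-1,n\}$ of $L_n$. Since $a_1<b<a_2$ prevents $\{a_1,a_2\}$ from being a consecutive endpoint pair, the only ways $\{1,2\}$ or $\{n-1,n\}$ can be fully contained in the target are $(a_1,b)=(1,2)$ or $(b,a_2)=(n-1,n)$, and both violate Lemma 4.1 (one vertex in the pair, one isolated). Hence at least one of $\{1,2\}$ is a non-target vertex, and by Theorem 2.7 together with the commutativity of Pauli measurements it suffices to rule out every Pauli choice on one such non-target endpoint. I would split into Case I ($a_1>1$, measure vertex $1$) and Case II ($a_1=1$ with $b\ge 3$, measure vertex $2$). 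In each case, $Z$ and $Y$ reduce to $L_{n-1}$ with the $a_1<b<a_2$ pattern preserved (so the inductive hypothesis applies), while $X$ with the ``inward'' neighbor either isolates $a_1$---which forces a contradiction, since an isolated target vertex must remain isolated under all further LCs and deletions---or duplicates the $Z/Y$ outcome.

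The main obstacle is the $X$ measurement of vertex $2$ using neighbor $3$ in the sub-case $b=3$: the resulting graph is $L_{n-1}$ realized as the chain $3{-}1{-}4{-}\cdots{-}n$, in which $b=3$ occupies the leftmost position $1'$ and $a_1=1$ the next position $2'$, producing the non-standard configuration $b'<a_1'<a_2'$ outside the lemma's hypothesis. My resolution is to re-apply Lemma 4.1 inside the reduced graph: the foliage class $\{1',2'\}$ of this new $L_{n-1}$ equals $\{b,a_1\}$, yet $b$ is isolated in the target and $a_1$ lies in the pair, so the intersection again violates Lemma 4.1 and the target is not extractable from the reduced graph. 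For any $b\ge 4$ the same measurement instead yields $a_1'<b'<a_2'$ in $L_{n-1}$, allowing a direct invocation of the inductive hypothesis. The symmetric sub-case ($a_2=n$ with $b\le n-2$, handled by measuring vertex $n-1$) is analogous by the reflection symmetry of $L_n$, completing the case split.
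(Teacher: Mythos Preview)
Your strategy is essentially the paper's: strong induction on $n$, with the inductive step carried out by measuring a single non-target vertex in all three Pauli bases (Theorem~2.3) and invoking Lemma~4.1 as the obstruction in the one delicate $X$-measurement sub-case where the two neighbours of the measured vertex are exactly $\{a_1,b\}$ (or symmetrically $\{b,a_2\}$). The paper, however, does not pin itself to an endpoint: it measures an \emph{arbitrary} non-target vertex $v$ and handles all positions uniformly. For $Z_v$ it splits on whether $v$ lies between $a_1$ and $a_2$ (disconnection kills the Bell pair) or outside (inductive hypothesis on the surviving segment); $Y_v$ always contracts to $L_n$ with the order preserved; and $X_v$ yields $L_n$ with the two former neighbours of $v$ now a leaf--axil pair, dispatched by source reduction if one of them is non-target, or by Lemma~4.1 if the pair is $\{a_1,b\}$. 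This single generic argument replaces your pre-filtering via the endpoint foliage classes and your separate ``main obstacle'' sub-case $b=3$ --- which is in fact precisely the paper's $\{a_1,b\}$ case, since the neighbours of the measured vertex $2$ are $1=a_1$ and $3=b$.

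Two small corrections to your write-up. First, in Case~II ($a_1=1$, measure vertex $2$), the $Z$ measurement does \emph{not} reduce to $L_{n-1}$ with the pattern preserved: deleting vertex $2$ isolates $a_1=1$. The obstruction is still immediate, but it belongs under your ``isolates $a_1$'' clause, not the ``reduces to $L_{n-1}$'' clause. Second, the symmetric sub-case you append ($a_2=n$, measure $n{-}1$) is redundant: after your pre-filter, Case~I ($a_1>1$) and Case~II ($a_1=1$, $b\ge 3$) already exhaust all configurations.
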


\begin{proof}
    We proceed through strong induction as in the $a_1<b_1<a_2<b_2$ case. The base case for $n = 3$ is trivial. Assuming the result holds up to some $n\geq 3$, we want to show that $K_2 \sqcup \{b\} \nless L_{n+1}$. By the contrapositive of Theorem \ref{theo:preduction}, this is equivalent to showing that $K_2 \sqcup \{b\} \nless P_v(L_{n+1})$ for each $P_v \in \{X_v,Y_v,Z_v\}$.
    \begin{enumerate}[label=(\arabic*)]
        \item $K_2 \sqcup \{b\} \stackrel{?}{\nless} Z_v(L_{n+1})$: If $v<a_1$ or $v>a_2$, then we apply the inductive hypothesis on the component containing $a_1$, $b$, and $a_2$. Otherwise, $Z_v$ disconnects $a_1$ and $a_2$. Then $K_2 \sqcup \{b\} \nless Z_v(L_{n+1})$, since local complementations (and clearly, vertex deletions) can't reconnect disconnected components.
        \item $K_2 \sqcup \{b\} \stackrel{?}{\nless} Y_v(L_{n+1})$: The graph becomes $L_n$ with the order of $a_1$, $b$, and $a_2$ unchanged, so we can apply the inductive hypothesis.
        \item $K_2 \sqcup \{b\} \stackrel{?}{\nless} X_v(L_{n+1})$: The graph becomes $L_n$ except the two neighbors of $v$ are a leaf-axil pair\footnote{\label{note:xmeas}The order depends on the choice of neighbor in $X_v$; LCs can freely reverse the order}. If either of this pair is not $a_1$, $b$, or $a_2$, then we can apply Theorem \ref{theo:sreduction} followed by the inductive hypothesis. If this pair is $\{a_1,b\}$ (we cannot have $\{a_1,a_2\}$ since $a_1<b<a_2$ and by symmetry $\{b,a_2\}$ is the same), then $a_1 \sim_\text{F} b$. Lemma \ref{lemma:fgec} yields a contradiction, since none of the descriptions correspond to $\{a_1,b\}$ in $K_2 \sqcup \{b\}$.
    \end{enumerate}
    So the result holds for all $n\geq 3$.
\end{proof}


\begin{proposition}
    It is not possible to extract two Bell pairs $\{a_1,a_2\}$ and $\{b_1,b_2\}$ from $L_n$ if $a_1 < b_1 < b_2 < a_2$.\label{prop:line2}
\end{proposition}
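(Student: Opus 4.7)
The plan is to mirror the strong-induction scheme of Lemma 5.1, with Lemma 5.1 itself becoming the key tool for closing the main technical subcase. Writing $G'$ for the target $K_2 \sqcup K_2$ (with edges $\{a_1,a_2\}$ and $\{b_1,b_2\}$), I induct on $n$.

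For the base case $n=4$, the labeling is forced to $(a_1,b_1,b_2,a_2) = (1,2,3,4)$. If $G' < L_4$, then since $|V(G')| = |V(L_4)|$ no vertex deletion is possible, forcing $G' \sim_{\text{LC}} L_4$. But $L_4$ has canonical foliage partition $\{\{1,2\},\{3,4\}\}$ while $G'$ has $\{\{1,4\},\{2,3\}\}$, contradicting the LC-invariance of the canonical foliage partition (Theorem 3.1).

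For the inductive step with $n+1 \geq 5$, it suffices by (the contrapositive of) Theorem 2.4 to rule out $G' < P_v(L_{n+1})$ for every $v \in V(L_{n+1}) \setminus \{a_1,b_1,b_2,a_2\}$ and every $P_v \in \{X_v, Y_v, Z_v\}$. The $Z_v$ and $Y_v$ cases proceed as in Lemma 5.1: $Z_v$ either separates $a_1$ from $a_2$ into distinct components (so the edge $\{a_1,a_2\}$ is unreachable by subsequent LCs and deletions), or leaves a smaller line with preserved special-vertex order for the inductive hypothesis; $Y_v$ yields $L_n$ with preserved order. The $X_v$ case produces an $L_n$-like graph with $v-1,v+1$ forming a leaf-axil pair; if at least one of them is non-special, we take that one as the leaf (using the LC equivalence of different $X_v$ choices) and apply source reduction (Theorem 2.5), again reducing to a smaller line with preserved order.

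The main obstacle is the remaining subcase of $X_v$, when $\{v-1,v+1\} \subset \{a_1,b_1,b_2,a_2\}$; the ordering $a_1 < b_1 < b_2 < a_2$ with the non-special $v$ sandwiched in between forces $(v-1,v+1)$ to be one of $(a_1,b_1)$, $(b_1,b_2)$, or $(b_2,a_2)$. For $(a_1,b_1)$ and $(b_2,a_2)$, the two neighbors are foliage-equivalent in the source after $X_v$ but lie in distinct Bell pairs of $G'$, where they are neither foliage-equivalent, isolated, nor absent, violating Lemma 4.5. The truly hard subcase is $(v-1,v+1) = (b_1,b_2)$, since now $b_1 \sim_{\text{F}} b_2$ holds in the target as well and Lemma 4.5 no longer bites. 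Here I apply target reduction (Theorem 4.4), deleting $b_1$ from both graphs: the source becomes $L_{n-1}$ carrying the remaining specials in the order $a_1 < b_2 < a_2$, and the target reduces to $K_2 \sqcup \{b_2\}$ in the same ordering, precisely the configuration forbidden by Lemma 5.1, which supplies the final contradiction.
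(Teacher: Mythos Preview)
Your proof is correct and follows essentially the same approach as the paper's: strong induction on $n$, reduction via the Pauli-measurement theorem to the three $P_v$ cases, handling $Z_v$ and $Y_v$ as in Lemma~5.1, dispatching the mixed-pair $X_v$ subcases with the foliage-equivalence lemma, and closing the $\{b_1,b_2\}$ subcase by target reduction down to Lemma~5.1. Your write-up is in fact slightly more explicit than the paper's (you spell out the base case via the foliage partition and enumerate all three special-pair $X_v$ subcases, and your $L_{n-1}$ is the correct vertex count after target reduction).
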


\begin{proof}
    We use induction as before. The base case for $n = 4$ is trivial. Assuming the result holds up to some $n\geq 4$, again we want to show that $K_2 \sqcup K_2 \nless P_v(L_{n+1})$ for each $P_v \in \{X_v,Y_v,Z_v\}$.
    \begin{enumerate}[label=(\arabic*)]
        \item $K_2 \sqcup K_2 \stackrel{?}{\nless} Z_v(L_{n+1})$: Same as Lemma \ref{lemma:line1a}.
        \item $K_2 \sqcup K_2 \stackrel{?}{\nless} Y_v(L_{n+1})$: Same as Lemma \ref{lemma:line1a}.
        \item $K_2 \sqcup K_2 \stackrel{?}{\nless} X_v(L_{n+1})$: The graph becomes $L_n$ except the two neighbors of $v$ are a leaf-axil pair. This case proceeds as in Lemma \ref{lemma:line1a} except when the pair is $\{b_1,b_2\}$. In this case, $b_1 \sim_\text{F} b_2$ in both $X_v(L_{n+1})$ and $K_2 \sqcup K_2$. So we can apply Theorem \ref{theo:treduction} with leaf deletion to get $K_2 \sqcup K_2 < X_v(L_{n+1}) \implies K_2 \sqcup \{b\} < L_n$. This reduces to the problem in Lemma \ref{lemma:line1a}, where we know extraction is impossible. So $K_2 \sqcup K_2 \nless X_v(L_{n+1})$.
    \end{enumerate}
    So the result holds for all $n\geq 4$.
\end{proof}

\begin{figure}[h]
\begin{center}
\begin{tikzpicture}[scale=0.75, transform shape]
  \node[circle, fill=purple!60] (1) at (0,0) {\textcolor{white}{1}};
  \node[circle, fill=purple!60, minimum size=17pt] (2) at (1.5,0) {\textcolor{white}{}};
  \node[circle, fill=purple!60] (3) at (3,0) {\textcolor{white}{$u$}};
  \node[circle, fill=purple!80] (4) at (4.5,0) {\textcolor{white}{$v$}};
  \node[circle, fill=purple!60] (5) at (6,0) {\textcolor{white}{$w$}};
  \node[circle, fill=purple!60, minimum size=17pt] (6) at (7.5,0) {\textcolor{white}{}};
  \node[circle, fill=purple!60, inner sep=0pt] (7) at (9,0) {\textcolor{white}{\footnotesize$n \kern-1.4pt + \kern-1.4pt 1$}};
  
  \draw[gray, line width=1.5] (2) -- (3);
  \draw[gray, line width=1.5] (3) -- (4);
  \draw[gray, line width=1.5] (4) -- (5);
  \draw[gray, line width=1.5] (5) -- (6);
  \node at (0.75,0) {$\dots$};
  \node at (8.25,0) {$\dots$};

  \node[circle, fill=purple!60] (1) at (0,-2) {\textcolor{white}{1}};
  \node[circle, fill=purple!60, minimum size=17pt] (2) at (1.5,-2) {\textcolor{white}{}};
  \node[circle, fill=purple!60] (3) at (3,-2) {\textcolor{white}{$u$}};
  \node[circle, fill=purple!60] (5) at (4.5,-2) {\textcolor{white}{$w$}};
  \node[circle, fill=purple!60, minimum size=17pt] (6) at (6,-2) {\textcolor{white}{}};
  \node[circle, fill=purple!60, inner sep=0pt] (7) at (7.5,-2) {\textcolor{white}{\footnotesize$n \kern-1.4pt + \kern-1.4pt 1$}};
  
  \draw[gray, line width=1.5] (2) -- (3);
  \draw[gray, line width=1.5] (5) -- (6);
  \node at (0.75,-2) {$\dots$};
  \node at (6.75,-2) {$\dots$};
  
  \node[circle, fill=purple!60] (1) at (0,-4) {\textcolor{white}{1}};
  \node[circle, fill=purple!60, minimum size=17pt] (2) at (1.5,-4) {\textcolor{white}{}};
  \node[circle, fill=purple!60] (3) at (3,-4) {\textcolor{white}{$u$}};
  \node[circle, fill=purple!60] (5) at (4.5,-4) {\textcolor{white}{$w$}};
  \node[circle, fill=purple!60, minimum size=17pt] (6) at (6,-4) {\textcolor{white}{}};
  \node[circle, fill=purple!60, inner sep=0pt] (7) at (7.5,-4) {\textcolor{white}{\footnotesize$n \kern-1.4pt + \kern-1.4pt 1$}};
  
  \draw[gray, line width=1.5] (2) -- (3);
  \draw[gray, line width=1.5] (5) -- (6);
  \draw[gray, line width=1.5] (3) -- (5);
  \node at (0.75,-4) {$\dots$};
  \node at (6.75,-4) {$\dots$};
  
  \node[circle, fill=purple!60] (1) at (0,-6) {\textcolor{white}{1}};
  \node[circle, fill=purple!60, minimum size=17pt] (2) at (1.5,-6) {\textcolor{white}{}};
  \node[circle, fill=purple!60] (3) at (3,-6) {\textcolor{white}{$u$}};
  \node[circle, fill=purple!60] (5) at (3,-7.5) {\textcolor{white}{$w$}};
  \node[circle, fill=purple!60, minimum size=17pt] (6) at (4.5,-6) {\textcolor{white}{}};
  \node[circle, fill=purple!60, inner sep=0pt] (7) at (6,-6) {\textcolor{white}{\footnotesize$n \kern-1.4pt + \kern-1.4pt 1$}};
  
  \draw[gray, line width=1.5] (2) -- (3);
  \draw[gray, line width=1.5] (3) -- (6);
  \draw[gray, line width=1.5] (3) -- (5);
  \node at (0.75,-6) {$\dots$};
  \node at (5.25,-6) {$\dots$};

  \begin{scope}[scale=1.3333, transform shape]
    \node[align=left, text width=5cm] at (-1,0) {(a) $L_{n+1}$};
    \node[align=left, text width=5cm] at (-1,-1.5) {(b) $Z_v(L_{n+1})$};
    \node[align=left, text width=5cm] at (-1,-3) {(c) $Y_v(L_{n+1})$};
    \node[align=left, text width=5cm] at (-1,-4.5) {(d) $X_v(L_{n+1})$};
  \end{scope}
  
\end{tikzpicture}
\end{center}
\caption{(a)-(d) Pauli measurements on a line graph on vertex $v$.}
\label{fig:line2}
\end{figure}
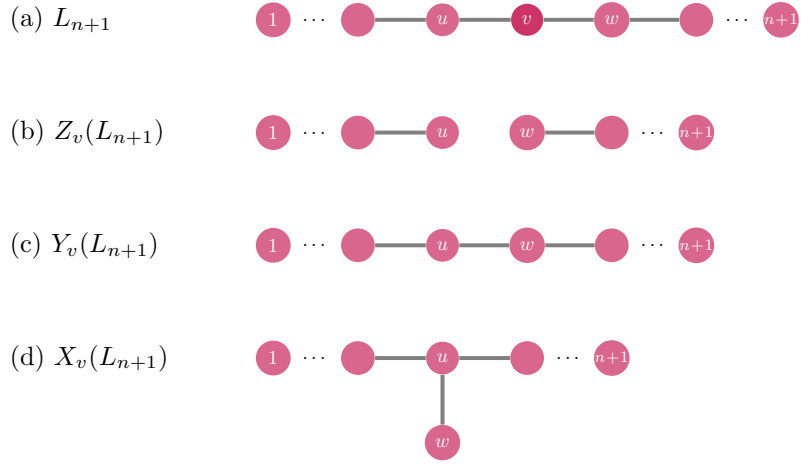

\begin{proposition}
    It is not possible to extract two Bell pairs $\{a_1,a_2\}$ and $\{b_1,b_2\}$ from $L_n$ if $a_1 < a_2 < b_1 < b_2$ and $a_2$ and $b_1$ are adjacent.\label{prop:line3}
\end{proposition}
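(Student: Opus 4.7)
The plan is strong induction on $n$, directly mirroring the templates of Lemma~\ref{lemma:line1a} and Proposition~\ref{prop:line2}. For the base case I would verify the smallest admissible $L_n$ by direct computation of its LC-equivalence class, checking that the target $K_2\sqcup K_2$ (with the specified four-vertex labelling) does not appear. For the inductive step on $L_{n+1}$, the contrapositive of Theorem~\ref{theo:preduction} reduces the question to showing $K_2\sqcup K_2\nless P_v(L_{n+1})$ for every non-target vertex $v$ and every $P_v\in\{X_v,Y_v,Z_v\}$, so the bulk of the work is a case split on $P_v$ and on the position of $v$.

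The $Y_v$ case is immediate: the result is an $L_n$ with the order of $a_1,a_2,b_1,b_2$ and the non-adjacency of $a_2,b_1$ preserved, so the inductive hypothesis applies. For $Z_v$, I split on where $v$ sits. If $v$ lies outside $[a_1,b_2]$, a line endpoint is pruned and the hypothesis still applies. If $v$ is strictly between $a_1$ and $a_2$, or strictly between $b_1$ and $b_2$, then $Z_v$ severs the corresponding Bell pair across two connected components, which subsequent LCs and deletions cannot bridge. The case of $v$ strictly between $a_2$ and $b_1$ is the delicate one (see below).

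The $X_v$ case produces an $L_n$ where two neighbors of $v$ form a leaf-axil pair, as in Figure~\ref{fig:line2}(d). If neither neighbor is a target vertex, source reduction (Theorem~\ref{theo:sreduction}) eliminates the leaf and the inductive hypothesis closes the case. If both are target vertices, by the ordering $a_1<a_2<b_1<b_2$ the pair is $\{a_1,a_2\}$, $\{a_2,b_1\}$, or $\{b_1,b_2\}$. For $\{a_1,a_2\}$ or $\{b_1,b_2\}$, the pair is foliage-equivalent in both $X_v(L_{n+1})$ and $K_2\sqcup K_2$, so target reduction (Theorem~\ref{theo:treduction}) with leaf deletion collapses the problem to a smaller line and a three-vertex target, landing in the regime covered by Lemma~\ref{lemma:line1a}. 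For $\{a_2,b_1\}$, the pair is foliage-equivalent in $X_v(L_{n+1})$ (leaf-axil) but in $K_2\sqcup K_2$ the vertices $a_2,b_1$ belong to distinct Bell pairs, so they are neither pairwise foliage-equivalent, nor both isolated, nor empty in the target; Lemma~\ref{lemma:fgec} then yields the contradiction immediately.

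The main obstacle I anticipate is the $Z_v$ case with $v$ strictly between $a_2$ and $b_1$: this is precisely where the non-adjacency hypothesis allows a separating vertex, and $Z_v$ splits $L_{n+1}$ into two disjoint sub-lines, one containing $\{a_1,a_2\}$ and one containing $\{b_1,b_2\}$. Making the induction go through here requires a stronger inductive statement controlling Bell-pair extraction from a sub-line in which the pair need not sit at an endpoint, together with a careful bookkeeping of how the non-adjacency condition propagates into at least one of the pieces. This is the essential technical heart of the argument and the step I expect to demand the most care.
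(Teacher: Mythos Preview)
Your plan cannot succeed, because the proposition as literally stated is false: when $a_1<a_2<b_1<b_2$ with $a_2$ and $b_1$ \emph{not} adjacent, extraction \emph{is} possible (delete every vertex strictly between $a_2$ and $b_1$, then $Y$-measure the remaining non-target vertices). You have in fact located the counterexample yourself. In your $Z_v$ case with $v$ strictly between $a_2$ and $b_1$, the two disjoint sub-lines each carry one full Bell pair and $K_2\sqcup K_2$ follows at once; no ``stronger inductive statement'' or ``careful bookkeeping'' will rule this out, because it is not an obstacle to the proof but a disproof of the claim. Cross-checking with Theorem~\ref{theo:linebellvm} and the figure beneath it shows that the intended hypothesis is that $a_2$ and $b_1$ \emph{are} adjacent (equivalently $b_1=a_2+1$); the word ``not'' in the proposition is a typo.

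For the corrected statement the paper's argument is entirely different from your induction template, and much shorter. It applies source reduction (Theorem~\ref{theo:sreduction}) repeatedly at both ends of the line: first part~(a) to delete outer leaves until $a_1$ and $b_2$ are the endpoints, then part~(b) to delete the successive non-target axils until only $L_4$ on $\{a_1,a_2,b_1,b_2\}$ remains, where $K_2\sqcup K_2$ is impossible by inspection. No Pauli case split, no target reduction, no Lemma~\ref{lemma:fgec}. Under the corrected hypothesis your induction would also close (the ``$v$ strictly between $a_2$ and $b_1$'' cases simply do not exist), but note that your target-reduction branch is not quite right even then: after collapsing $\{a_1,a_2\}$ or $\{b_1,b_2\}$ the surviving isolated vertex lies \emph{outside} the remaining Bell pair on the line, not between its endpoints, so you are not in the setting of Lemma~\ref{lemma:line1a}; you would need a separate short argument (e.g.\ via Lemma~\ref{lemma:fgec} after further source reduction) to finish.
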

\begin{proof}
    By repeated application of Theorem \ref{theo:sreduction}a on either end of the graph, we obtain a line graph with $a_1$ and $b_2$ as the endpoints. Then by repeated application of Theorem \ref{theo:sreduction}b on either end, we obtain the line graph $L_4$, from which extraction of $K_2 \sqcup K_2$ is impossible. Since this extraction is equivalent to the original extraction problem, the original is also impossible.
\end{proof}

These two propositions along with Theorem 3 in \cite{limits} allow us to solve the Bell vertex-minor problem completely for line graphs.

\begin{theorem}
    Two Bell pairs $\{a_1,a_2\}$ and $\{b_1,b_2\}$ can be extracted from $L_n$ iff $a_1 < a_2 < b_1 < b_2$ and $a_2$ and $b_1$ are not adjacent.\label{theo:linebellvm}
\end{theorem}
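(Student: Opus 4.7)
The plan is to prove both directions of the iff; the backward direction $(\Leftarrow)$ by explicit construction and the forward direction $(\Rightarrow)$ by case-analysis on the three prior impossibility results.

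For $(\Leftarrow)$: assuming $a_1 < a_2 < b_1 < b_2$ with $a_2,b_1$ non-adjacent, there exists some vertex $v$ with $a_2 < v < b_1$. I would measure $v$ in the $Z$-basis, yielding $Z_v(L_n) = L_n \setminus v$; removing $v$ together with its incident edges splits the path into two disjoint sub-line-graphs $L[1,v-1]$ and $L[v+1,n]$, containing $\{a_1,a_2\}$ and $\{b_1,b_2\}$ respectively. Then on each sub-line-graph I would iterate source reduction (Theorem \ref{theo:sreduction}): trim outer non-target leaves via part (a) until the target vertices are the endpoints of the remaining subpath, then strip non-target axils via part (b) until only a $K_2$ on the Bell pair remains. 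Because the two subpaths sit on disjoint vertex sets, the reductions on either side compose independently and do not interact, yielding $K_2 \sqcup K_2$ on $\{a_1,a_2,b_1,b_2\}$ as a vertex-minor of $L_n$.

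For $(\Rightarrow)$: by the labeling symmetry within each Bell pair and between the two pairs, one may assume $a_1 < a_2$, $b_1 < b_2$, and $a_1 \le b_1$. The four targets then fall into exactly one of three configurations: $(i)$ $a_1 < b_1 < a_2 < b_2$ (interleaved), impossible by Theorem~3 of \cite{limits}; $(ii)$ $a_1 < b_1 < b_2 < a_2$ (nested), impossible by Proposition \ref{prop:line2}; or $(iii)$ $a_1 < a_2 < b_1 < b_2$ (separated) with $a_2,b_1$ adjacent, impossible by Proposition \ref{prop:line3}. Therefore every configuration admitting a successful extraction must already satisfy the claimed ordering together with non-adjacency of $a_2$ and $b_1$.

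The main obstacle is bookkeeping rather than mathematics: on one hand verifying that the WLOG reductions of $(\Rightarrow)$ cover every arrangement of the four labels via the pair symmetries, and on the other hand confirming in $(\Leftarrow)$ that the two source reductions really compose into the target $K_2 \sqcup K_2$ rather than some merely LC-equivalent graph, which is immediate because the two components are already disconnected after $Z_v$ and the local-complementation/deletion toolkit cannot merge disconnected components. No new technical machinery beyond the $Z$-measurement trick and Theorem \ref{theo:sreduction} is needed.
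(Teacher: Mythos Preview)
Your proof is correct and follows essentially the same approach as the paper: the forward direction is handled by the identical contrapositive case split over the three impossibility results (Theorem~3 of \cite{limits}, Proposition~\ref{prop:line2}, Proposition~\ref{prop:line3}), and the backward direction is an explicit construction. The only minor difference is that for $(\Leftarrow)$ the paper $Z$-deletes \emph{all} vertices outside $[a_1,a_2]\cup[b_1,b_2]$ and then $Y$-measures the remaining non-target vertices, whereas you $Z$-delete a single separating vertex and then invoke source reduction on each component; both constructions are equally valid and of comparable simplicity.
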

\begin{proof}
    Consider the contrapositive of the forwards direction. All arrangements not described in the statement fall under one of the Proposition \ref{prop:line2}, Proposition \ref{prop:line3}, or Theorem 3 in \cite{limits}, in which extraction is impossible.
    
    For the backwards direction, we first delete all vertices with $v<a_1$, $a_2<v<b_1$, or $b_2<v$. Then $Y$-measuring every remaining vertex yields the desired result.
\end{proof}

\begin{figure}
\centering
\begin{tikzpicture}[scale=0.75, transform shape]
    \node[circle, fill=purple!60, minimum size=17pt] (1) at (-2,0) {};
    \node[circle, fill=purple!60, inner sep=2pt] (2) at (-0.5,0) {\textcolor{white}{$a_1$}};
    \node[circle, fill=purple!60, inner sep=1.6pt] (3) at (1,0) {\textcolor{white}{$b_1$}};
    \node[circle, fill=purple!60, inner sep=2pt] (4) at (2.5,0) {\textcolor{white}{$a_2$}};
    \node[circle, fill=purple!60, minimum size=17pt] (5) at (4,0) {};
    \node[circle, fill=purple!60, inner sep=1.6pt] (6) at (5.5,0) {\textcolor{white}{$b_2$}};
    
    \draw[gray, line width=1.5] (1) -- (2);
    \draw[gray, line width=1.5] (2) -- (3);
    \draw[gray, line width=1.5] (3) -- (4);
    \draw[gray, line width=1.5] (4) -- (5);
    \draw[gray, line width=1.5] (5) -- (6);

    \draw[gray!50, line width=1, <->] (1,-1.75)+(65:2.5cm) arc (65:115:2.5cm);
    \draw[gray!50, line width=1, <->] (3.25,2.8)+(240:3.5cm) arc (240:300:3.5cm);

    \node[circle, fill=purple!60, minimum size=17pt] (1) at (-2,-2.5) {};
    \node[circle, fill=purple!60, inner sep=2pt] (2) at (-0.5,-2.5) {\textcolor{white}{$a_1$}};
    \node[circle, fill=purple!60, inner sep=1.6pt] (3) at (1,-2.5) {\textcolor{white}{$b_1$}};
    \node[circle, fill=purple!60, inner sep=1.6pt] (4) at (2.5,-2.5) {\textcolor{white}{$b_2$}};
    \node[circle, fill=purple!60, minimum size=17pt] (5) at (4,-2.5) {};
    \node[circle, fill=purple!60, inner sep=2pt] (6) at (5.5,-2.5) {\textcolor{white}{$a_2$}};
    
    \draw[gray, line width=1.5] (1) -- (2);
    \draw[gray, line width=1.5] (2) -- (3);
    \draw[gray, line width=1.5] (3) -- (4);
    \draw[gray, line width=1.5] (4) -- (5);
    \draw[gray, line width=1.5] (5) -- (6);
    
    \draw[gray!50, line width=1, <->] (2.5,-6.5)+(60:5cm) arc (60:120:5cm);
    \draw[gray!50, line width=1, <->] (1.75,-1)+(255:2cm) arc (255:285:2cm);

    \node[circle, fill=purple!60, minimum size=17pt] (1) at (-2,-5) {};
    \node[circle, fill=purple!60, inner sep=2pt] (2) at (-0.5,-5) {\textcolor{white}{$a_1$}};
    \node[circle, fill=purple!60, inner sep=2pt] (3) at (1,-5) {\textcolor{white}{$a_2$}};
    \node[circle, fill=purple!60, inner sep=1.6pt] (4) at (2.5,-5) {\textcolor{white}{$b_1$}};
    \node[circle, fill=purple!60, minimum size=17pt] (5) at (4,-5) {};
    \node[circle, fill=purple!60, inner sep=1.6pt] (6) at (5.5,-5) {\textcolor{white}{$b_2$}};
    
    \draw[gray, line width=1.5] (1) -- (2);
    \draw[gray, line width=1.5] (2) -- (3);
    \draw[gray, line width=1.5] (3) -- (4);
    \draw[gray, line width=1.5] (4) -- (5);
    \draw[gray, line width=1.5] (5) -- (6);
    
    \draw[gray!50, line width=1, <->] (4,-6.75)+(65:2.5cm) arc (65:115:2.5cm);
    \draw[gray!50, line width=1, <->] (0.25,-6.5)+(75:2cm) arc (75:105:2cm);

    \node[circle, fill=purple!60, inner sep=2pt] (1) at (-2,-7.5) {\textcolor{white}{$a_1$}};
    \node[circle, fill=purple!60, inner sep=2pt] (2) at (-0.5,-7.5) {\textcolor{white}{$a_2$}};
    \node[circle, fill=purple!60, minimum size=17pt] (3) at (1,-7.5) {};
    \node[circle, fill=purple!60, inner sep=1.6pt] (4) at (2.5,-7.5) {\textcolor{white}{$b_1$}};
    \node[circle, fill=purple!60, minimum size=17pt] (5) at (4,-7.5) {};
    \node[circle, fill=purple!60, inner sep=1.6pt] (6) at (5.5,-7.5) {\textcolor{white}{$b_2$}};
    
    \draw[gray, line width=1.5] (1) -- (2);
    \draw[gray, line width=1.5] (2) -- (3);
    \draw[gray, line width=1.5] (3) -- (4);
    \draw[gray, line width=1.5] (4) -- (5);
    \draw[gray, line width=1.5] (5) -- (6);
    
    \draw[gray!50, line width=1, <->] (4,-9.25)+(65:2.5cm) arc (65:115:2.5cm);
    \draw[gray!50, line width=1, <->] (-1.25,-9)+(75:2cm) arc (75:105:2cm);

    \node[red,scale=3] at (-3.5,0) {\ding{55}};
    \node[red,scale=3] at (-3.5,-2.5) {\ding{55}};
    \node[red,scale=3] at (-3.5,-5) {\ding{55}};
    \node[green!70!blue!70,scale=3] at (-3.5,-7.5) {\ding{51}};

    \begin{scope}[scale=1.333,transform shape]
      \node at (-3.75,0) {(a)};
      \node at (-3.75,-1.875) {(b)};
      \node at (-3.75,-3.75) {(c)};
      \node at (-3.75,-5.625) {(d)};
    \end{scope}
\end{tikzpicture}
\caption{All four configurations of the Bell vertex-minor problem on line graphs. Extraction is not possible in (a) Theorem 3 in ref. \cite{limits}, (b) Proposition \ref{prop:line2}, (c) Proposition \ref{prop:line3}, and possible in (d) the remaining case.}
\end{figure}
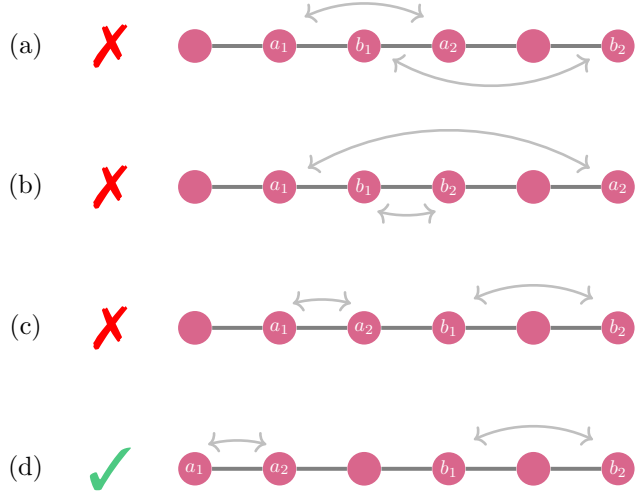

\begin{theorem}
    Two Bell pairs $\{a_1,a_2\}$ and $\{b_1,b_2\}$ can be extracted from a tree graph $G$ iff the path from $a_1$ to $a_2$ and the path from $b_1$ to $b_2$ are non-adjacent.\label{theo:treebellvm}
\end{theorem}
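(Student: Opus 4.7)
The plan is to prove the theorem by strong induction on $n = |V(G)|$, handling both directions of the iff simultaneously via the equivalence provided by source reduction (Theorem \ref{theo:sreduction}). The base case $n = 4$ admits only two tree shapes: $G = L_4$, for which Theorem \ref{theo:linebellvm} applies directly and each labeling of the four keys gives adjacent paths and so no extraction; and $G = K_{1,3}$ with all four vertices as keys, in which every key-to-key path runs through the center (so the paths are adjacent) and all four vertices lie in a single foliage-equivalence class in $K_{1,3}$ but in two separate classes in $K_2 \sqcup K_2$, so Lemma \ref{lemma:fgec} rules out extraction. In both shapes the iff therefore holds.

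For the inductive step with $n > 4$ I split into three cases. In Case 1, $G$ has a leaf $v \notin \{a_1,a_2,b_1,b_2\}$; Theorem \ref{theo:sreduction}a gives $K_2 \sqcup K_2 < G \iff K_2 \sqcup K_2 < G \setminus v$. Since a leaf is never interior to a path between other vertices, $P_{a_1 a_2}$ and $P_{b_1 b_2}$ in $G \setminus v$ are identical to those in $G$, so the adjacency condition is preserved verbatim, and the inductive hypothesis finishes the case. In Case 2, all leaves of $G$ are key but some non-key vertex $v$ is an axil whose corresponding leaf $l$ is itself a key vertex; Theorem \ref{theo:sreduction}b produces a tree $G'$ on $|V(G)|-1$ vertices in which $v$ is deleted and $l$ inherits $v$'s remaining neighbors. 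Paths in $G'$ are obtained from those in $G$ by replacing every occurrence of $v$ by $l$, and one verifies that this replacement preserves the adjacency relation in both directions: a shared vertex other than $v$ is still shared, a shared $v$ becomes a shared $l$, a connecting edge through $v$ becomes one through $l$, and no spurious sharing or connecting edge is introduced. The inductive hypothesis then finishes the case.

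In Case 3, $G$ has no non-key leaf and no non-key axil with a key leaf. A short analysis on the number of leaves $L \in \{2,3,4\}$ rules out $L = 4$ (every key leaf would have an internal and therefore non-key axil) and shows that $L = 3$ forces $G = K_{1,3}$, which falls under the base case. Thus $G$ is a path $L_n$ with two key endpoints whose axils are the other two key vertices, and Theorem \ref{theo:linebellvm} closes the induction directly.

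The main obstacle is verifying the adjacency-preservation claim in Case 2, since the axil-leaf swap underlying Theorem \ref{theo:sreduction}b alters edges in a nontrivial way. One must check each subcase separately, according to whether $v$ lies on zero, one, or both of the paths, and whether the shared structure between $P_{a_1a_2}$ and $P_{b_1b_2}$ is a common vertex or a connecting edge, to confirm that adjacency is neither destroyed nor spuriously created. A smaller subtlety is the case analysis in Case 3, which hinges on the observation that a tree whose every leaf lies in a four-element key set and whose every key-leaf axil is also a key vertex is forced to be either $K_{1,3}$ or a path.
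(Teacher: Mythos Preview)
Your proof is correct and follows essentially the same approach as the paper's: strong induction on $|V(G)|$, with the inductive step split into the same three cases (non-key leaf, non-key axil of a key leaf, and neither), handled respectively by Theorem~\ref{theo:sreduction}(a), Theorem~\ref{theo:sreduction}(b), and a leaf-count reduction to line graphs via Theorem~\ref{theo:linebellvm}. Your base-case treatment of $K_{1,3}$ via Lemma~\ref{lemma:fgec} and your explicit description of the axil--leaf swap in Case~2 are slightly more detailed than the paper's, but the overall architecture is the same.
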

\begin{proof}
    We induct on the number of vertices $n$ in $G$.

    In the base case for $n=4$, $G$ is either a line graph or a star graph. In either case, extraction is impossible. The two paths must also be adjacent.

    Now assume the result holds up to some $n \geq 4$. Given some tree graph $G$ with $n+1$ vertices, first note that leaf deletion as in Theorem \ref{theo:sreduction} (and axil deletion, which is isomorphic) results in tree graphs \cite{diestel} and never changes the number of connected components. Consider the following cases:

    \begin{enumerate}[label=(\arabic*)]
        \item There is a leaf $v$ not in the target graph.
        
        By Theorem \ref{theo:sreduction}a and the inductive hypothesis, $K_2\sqcup K_2 < G \iff K_2\sqcup K_2 < (G \setminus v) \iff $ the paths on $G\setminus v$ are non-adjacent $\iff$ the paths on $G$ are non-adjacent. To show the last equivalence, consider the contrapositive. A graph with adjacent paths will still contain those adjacent paths after an addition of a leaf. Deletion of a leaf not in the target graph also doesn't change this: it can't be an endpoint of one of the paths, which are path of the target graph, it can't be an intermediate vertex, where $\deg \geq 2$, and it can't delete an edge connecting the two paths, which are between vertices in those paths.
        
        \item All leaves are in the target graph, and there is a leaf $w$ with axil $v$ not in the target graph.
        
        By Theorem \ref{theo:sreduction}b and the inductive hypothesis, $K_2\sqcup K_2 < G \iff K_2\sqcup K_2 < (\tau_w \circ \tau_v(G) \setminus v) \iff $ the paths on $\tau_w \circ \tau_v(G) \setminus v$ are non-adjacent $\iff$ the paths on $G$ are non-adjacent. Again, consider the contrapositive. If $G$ has adjacent paths, then $\tau_w \circ \tau_v(G) \setminus v$ amounts to deleting the endpoint of one of the paths, a leaf. Then the paths are still adjacent, since the edge connecting the two paths could not have been incident on that leaf. If $\tau_w \circ \tau_v(G) \setminus v$ has adjacent paths, then $G$ extends one path with a leaf at one endpoint; the paths remain adjacent.

        \item All leaves and corresponding axils are in the target graph.
        
        Consider the number of leaves in $G$. It is known in graph theory that every tree must have at least two leaves. If there are exactly two leaves, $G$ must be a line graph. Then we can apply Theorem \ref{theo:linebellvm}, which is exactly the statement of this theorem.
        
        If $G$ has three leaves, then each must have the last vertex of the target graph as its axil. Otherwise if one of the leaves has another leaf as its axil, we would have a disconnected $K_2$. But we know that $G$ is a single connected component by the definition of tree graphs. If $G$ has more than four vertices, the axil must have $\deg \geq 4$. It is known in graph theory that a tree has at least as many leaves as the vertex with maximum degree, so $G$ must have a fourth leaf not in the target graph. But that contradicts our assumption, so we can ignore this case.
        
        Finally, $G$ cannot have four or more leaves, because as above, we would have a disconnected $K_2$.
    \end{enumerate}
    So the result holds for all $n\geq 4$.
\end{proof}

The core of the proof lies in the fact that due to source reduction, we only need to consider line graphs and the four vertex star graph.

This result is noteworthy because although Bell vertex-minor in general is NP-complete, the problem for tree graphs can be solved by simple tree traversal. This is suggested by the fact that distance-hereditary graphs, which contain tree graphs, are easier to solve, as studied in ref. \cite{algo}. Admittedly, extraction is possible only in the obvious case, which works on any graph.

\subsection{Ring Graphs}

The \textbf{ring graph} is defined as $R_n(V_n, E_n)$ with $V_n = \{1,\dots, n\}$ and $E_n=\{\{1,2\}, \{2,3\}, \dots, \{n-1,n\}, \{n,1\}\}$. Ref. \cite{limits} showed that $G(\{a_1,b_1,a_2,b_2\}, \{\{a_1,a_2\},\{b_1,b_2\}\}) \nless R_n$, where $a_1<b_1<a_2<b_2$. We will also be able to solve the ring graph completely.

\begin{lemma}
    It is not possible to extract a Bell pair $\{a_1,a_2\}$ and an isolated qubit $b$ from $R_n$ if $a_1,a_2,b$ are consecutive, in that order.\label{lemma:ring1a}
\end{lemma}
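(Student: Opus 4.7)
The plan is to strongly induct on $n$, mirroring the template of Lemma \ref{lemma:line1a}. The base case is $n=3$: the LC-orbit of $R_3=K_3$ consists of just $R_3$ and $L_3$, containing three and two edges respectively, so neither is LC-equivalent to the single-edge target $K_2\sqcup\{b\}$, and no free vertex is available for deletion. For the inductive step on $R_{n+1}$ with $a_1,a_2,b$ consecutive in that order, invoke the contrapositive of Theorem \ref{theo:preduction}: it suffices to rule out $K_2\sqcup\{b\}<P_v(R_{n+1})$ for every $P_v\in\{X_v,Y_v,Z_v\}$ on every free vertex $v\in V(R_{n+1})\setminus\{a_1,a_2,b\}$.

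For $Z_v$, deletion opens the cycle and yields the line $Z_v(R_{n+1})=L_n$, in which the triple retains its consecutive placement. After reading the line from the appropriate endpoint, the configuration matches the $a_1<b<a_2$ hypothesis of Lemma \ref{lemma:line1a}, which forbids extraction. For $Y_v$, the local complementation toggles the missing chord between $v$'s two ring-neighbors, so $Y_v(R_{n+1})=\tau_v(R_{n+1})\setminus v$ is just the smaller ring $R_n$ with $a_1,a_2,b$ still consecutive in the same cyclic order, and the inductive hypothesis closes the case.

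The main obstacle is $X_v$. The graph $X_v(R_{n+1})$ is $R_n$ modified so that the two former ring-neighbors of $v$ are now a leaf-axil pair dangling from the remaining cycle. If this leaf-axil pair is disjoint from $\{a_1,a_2,b\}$, then source reduction (Theorem \ref{theo:sreduction}) prunes the leaf and the inductive hypothesis applies. If the pair instead contains two members of the triple, I would apply Lemma \ref{lemma:fgec} to any hypothetical extraction: the target $K_2\sqcup\{b\}$ has $a_1\sim_\text{F}a_2$ as its only foliage equivalence while $b$ is a standalone isolated vertex, so an $X_v$-induced pairing that mixes $b$ with an element of the Bell pair cannot be compatible with the target's foliage structure, giving the contradiction. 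The borderline sub-case in which the induced leaf-axil pair coincides with $\{a_1,a_2\}$ is handled by foliage target reduction (Theorem \ref{theo:treduction}), which collapses the pair and reduces the problem to extracting an even smaller Bell-plus-isolated configuration on a strictly smaller ring, dispatched by the inductive hypothesis.

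The difficulty concentrates in the $X_v$ analysis, and in particular the bookkeeping needed when $v$ is adjacent to two members of the target triple; there the foliage framework (Lemma \ref{lemma:fgec} and Theorem \ref{theo:treduction}) must be deployed carefully to separate the target's rigid foliage structure from the foliage equivalences that $X$-measurements introduce.
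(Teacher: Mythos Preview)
Your $Z_v$ case does not go through. After deleting $v$ from $R_{n+1}$ you obtain a line on which $a_1,a_2,b$ are still consecutive \emph{in that order}: whichever end you read from, $a_2$ sits between $a_1$ and $b$. Lemma~\ref{lemma:line1a} requires the isolated vertex $b$ to lie strictly between the Bell-pair endpoints, and that hypothesis is never met here, so the appeal to Lemma~\ref{lemma:line1a} is invalid. The paper instead applies source reduction (Theorem~\ref{theo:sreduction}a) repeatedly at both ends of the line to strip everything outside the consecutive triple, leaving the connected three-vertex path $a_1\text{--}a_2\text{--}b$, whose LC-orbit contains only $L_3$ and $K_3$ and hence is never $K_2\sqcup\{b\}$.

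Your $X_v$ analysis is also slightly off. Because $a_1,a_2,b$ are consecutive and $v$ is none of them, the two ring-neighbours of $v$ can contain at most one member of the triple unless $n+1=4$, in which case the pair is $\{a_1,b\}$. The sub-case ``pair equals $\{a_1,a_2\}$'' that you propose to handle with target reduction is vacuous. You also omit the case where the pair meets the triple in exactly one vertex; there one simply source-reduces the non-target member of the leaf-axil pair and invokes the inductive hypothesis. For the genuine two-member case $\{a_1,b\}$, your Lemma~\ref{lemma:fgec} argument is correct, though the paper just observes that $n+1=4$ forces the resulting graph to be a connected three-vertex graph, from which extraction is impossible.
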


\begin{proof}
    The base case for $n = 3$ is trivial. Assuming the result holds up to some $n\geq 3$, we want to show that $K_2 \sqcup \{b\} \nless P_v(R_{n+1})$ for each $P_v \in \{X_v,Y_v,Z_v\}$.
    \begin{enumerate}[label=(\arabic*)]
        \item $K_2 \sqcup \{b\} \stackrel{?}{\nless} Z_v(R_{n+1})$: The graph becomes a line graph. By repeated application of Theorem \ref{theo:sreduction}a on either end of the graph, we obtain a connected three vertex graph, where extraction is impossible.
        \item $K_2 \sqcup \{b\} \stackrel{?}{\nless} Y_v(R_{n+1})$: The graph becomes $R_n$ with $a_1,a_2,b$ still consecutive, so we can apply the inductive hypothesis.
        \item $K_2 \sqcup \{b\} \stackrel{?}{\nless} X_v(R_{n+1})$: The graph becomes $R_n$ except the two neighbors of $v$ are a leaf-axil pair. If either of this pair is not $a_1$, $a_2$, or $b$, then we can apply Theorem \ref{theo:sreduction} followed by the inductive hypothesis. If this pair is $\{a_1,b\}$ we get a connected three vertex graph, where extraction is impossible.
    \end{enumerate}
    So the result holds for all $n\geq 3$.
\end{proof}

\begin{proposition}
    It is not possible to extract two Bell pairs $\{a_1,a_2\}$ and $\{b_1,b_2\}$ from $R_n$ if $a_1,a_2,b_1$ are consecutive, in that order.\label{prop:ring2}
\end{proposition}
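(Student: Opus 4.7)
The plan is to mirror the inductive proof of Proposition \ref{prop:line2}, using strong induction on $n$. For the base case $n = 4$, the ring $R_4$ is connected while $K_2 \sqcup K_2$ has two components; since local complementations preserve the number of connected components and no vertex of the four-vertex target can be deleted, $R_4$ cannot be transformed into $K_2 \sqcup K_2$.

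For the inductive step, I assume the proposition for every $R_m$ with $4 \le m \le n$ and use Theorem \ref{theo:preduction} to reduce to showing $K_2 \sqcup K_2 \nless P_v(R_{n+1})$ for every non-target $v$ and every $P \in \{X, Y, Z\}$. The $Z_v$ measurement returns $L_n$, in which $a_2$ remains adjacent to $b_1$, so Theorem \ref{theo:linebellvm} rules out extraction. The $Y_v$ measurement contracts $v$ (adding the edge between its two ring-neighbors and then deleting it), producing $R_n$ with the triple $a_1, a_2, b_1$ still consecutive, which is exactly the hypothesis for the inductive hypothesis.

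The $X_v$ case is where the real work lies. After $X_v$, the two ring-neighbors $u, w$ of $v$ form a leaf-axil pair, and deleting the non-target member of this pair (either directly via Theorem \ref{theo:sreduction}a as a leaf, or with the LC preamble of Theorem \ref{theo:sreduction}b as an axil) returns $R_{n-1}$ on the remaining vertices in the inherited cyclic order. If at least one of $u, w$ is non-target this source reduction preserves all four targets and the consecutive triple, so the inductive hypothesis closes the case (note that this sub-case is vacuous for $n + 1 = 5$, where the unique non-target $v$ must have two target neighbors). If instead both $u$ and $w$ are targets, then by the cyclic ordering the pair must be $\{b_1, b_2\}$ or $\{a_1, b_2\}$, since the adjacent target pairs $\{a_1, a_2\}$ and $\{a_2, b_1\}$ leave no room for a vertex between them. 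In the first sub-case, $b_1 \sim_{\text{F}} b_2$ in both $X_v(R_{n+1})$ and $K_2 \sqcup K_2$, so Theorem \ref{theo:treduction} reduces the problem to $K_2 \sqcup \{b_2\} < R_{n-1}$ with $a_1, a_2, b_2$ consecutive, which is impossible by Lemma \ref{lemma:ring1a}. In the second sub-case, $a_1 \sim_{\text{F}} b_2$ holds in $X_v(R_{n+1})$ but fails in $K_2 \sqcup K_2$ (both vertices are leaves but of different axils and have empty shared neighborhood), so Lemma \ref{lemma:fgec} delivers the contradiction.

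The trickiest ingredient is tracking the foliage structure through the $X_v$ step: verifying that source reduction really lands on $R_{n-1}$ with the targets in the expected cyclic positions, and checking that the ``both neighbors target'' situation splits cleanly into exactly the two sub-cases above. Once that geometric bookkeeping is settled, each branch routes into either the inductive hypothesis, Lemma \ref{lemma:ring1a}, or Lemma \ref{lemma:fgec} as indicated.
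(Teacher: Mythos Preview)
Your proposal is correct and follows essentially the same inductive strategy as the paper: reduce via Theorem~\ref{theo:preduction} to the three Pauli cases, dispatch $Y_v$ by the inductive hypothesis, and in the $X_v$ case split on whether the resulting leaf--axil pair meets the target, invoking source reduction, Lemma~\ref{lemma:fgec}, or Theorem~\ref{theo:treduction} with Lemma~\ref{lemma:ring1a} exactly as the paper does. The one substantive difference is in the $Z_v$ case: the paper fixes the labeling $a_1<a_2<b_1<b_2$ and splits on the position of $v$, citing Propositions~\ref{prop:line2} and~\ref{prop:line3} separately, whereas you appeal directly to Theorem~\ref{theo:linebellvm} via the observation that $a_2$ and $b_1$ stay adjacent in the resulting path. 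Your shortcut is valid since Theorem~\ref{theo:linebellvm} is already established, though you might spell out that this adjacency forces the $a$-path and $b$-path to be adjacent regardless of how the cut at $v$ orders the four targets along the line.
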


\begin{proof}
    The base case for $n = 4$ is trivial. WLOG let $a_1<a_2<b_1<b_2$. Assuming the result holds up to some $n\geq 4$, we want to show that $K_2 \sqcup K_2 \nless P_v(R_{n+1})$ for each $P_v \in \{X_v,Y_v,Z_v\}$.
    \begin{enumerate}[label=(\arabic*)]
        \item $K_2 \sqcup K_2 \stackrel{?}{\nless} Z_v(R_{n+1})$: If $v<a_1$ or $v>b_2$, the graph becomes the line graph in Proposition \ref{prop:line3}, where extraction is impossible. Otherwise if $b_1<v<b_2$, the graph becomes the line graph in Proposition \ref{prop:line2}, where extraction is also impossible.
        \item $K_2 \sqcup K_2 \stackrel{?}{\nless} Y_v(R_{n+1})$: Same as Lemma \ref{lemma:ring1a}.
        \item $K_2 \sqcup K_2 \stackrel{?}{\nless} X_v(R_{n+1})$: The graph becomes $R_n$ except the two neighbors of $v$ are a leaf-axil pair. If this pair is $\{a_1,b_2\}$, then $a_1 \sim_\text{F} b_2$. Lemma \ref{lemma:fgec} yields a contradiction, since none of the descriptions correspond to $\{a_1,b_2\}$ in $K_2 \sqcup K_2$. If the pair is $\{b_1,b_2\}$, then $b_1 \sim_\text{F} b_2$ in both $X_v(R_{n+1})$ and $K_2 \sqcup K_2$. So we can apply Theorem \ref{theo:treduction} with leaf deletion to get $K_2 \sqcup K_2 < X_v(R_{n+1}) \implies K_2 \sqcup \{b\} < R_n$. Deletion of the leaf in this pair reduces to the problem in Lemma \ref{lemma:ring1a}, which we know to be impossible. So $K_2 \sqcup K_2 \nless X_v(R_{n+1})$.
    \end{enumerate}
    So the result holds for all $n\geq 4$.
\end{proof}

\begin{theorem}
    Two Bell pairs $\{a_1,a_2\}$ and $\{b_1,b_2\}$ can be extracted from $R_n$ iff $a_1 < a_2 < b_1 < b_2$ (up to some reordering) and no three of $\{a_1,a_2,b_1,b_2\}$ are consecutive.\label{theo:ringbellvm}
\end{theorem}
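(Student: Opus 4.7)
The plan is to prove the ``only if'' by contrapositive and the ``if'' by a constructive extraction. Following the earlier notation, let $A_1, A_2, A_3, A_4$ denote the four \emph{arcs} of non-target vertices between consecutive targets in the cyclic order $a_1, a_2, b_1, b_2$ (so $A_1$ lies between $a_1, a_2$, $A_2$ between $a_2, b_1$, and so on). In this language, ``no three of the targets are consecutive'' is exactly the statement that no two cyclically adjacent arcs are simultaneously empty.

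For necessity, there are two ways the stated condition can fail. If the four targets are not in the cyclic order $a_1, a_2, b_1, b_2$ up to reordering Bell-pair labels, they occur interleaved as $a_1, b_1, a_2, b_2$ along the ring, and extraction is impossible by Theorem 3 in \cite{limits}. Otherwise some three of the targets are consecutive in the ring; up to the rotational and reflective symmetries of $R_n$ together with the symmetries of the Bell-pair labels, the consecutive triple can be taken to be $a_1, a_2, b_1$, and Proposition \ref{prop:ring2} rules out extraction.

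For sufficiency, assume the targets are in cyclic order $a_1, a_2, b_1, b_2$ with no three consecutive. In Case 1, both $A_2$ and $A_4$ are nonempty; $Z$-measuring any $v \in A_2$ turns the ring into a line whose targets, read in order, are $b_1, b_2, a_1, a_2$, with $b_2$ nonadjacent to $a_1$ (because $A_4 \neq \emptyset$), so Theorem \ref{theo:linebellvm} supplies the extraction. In Case 2, at least one of $A_2, A_4$ is empty, which by ``no three consecutive'' forces $A_1$ and $A_3$ nonempty. I apply $Y$-measurements to every remaining vertex of $A_2$ and $A_4$ in turn; each such measurement collapses its arc by one while leaving a smaller ring, reducing to the sub-case $A_2 = A_4 = \emptyset$. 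Further $Y$-measurements inside $A_1$ and $A_3$ shrink each to a single vertex, leaving $R_6$ with cyclic structure $a_1, u, a_2, b_1, w, b_2$. A direct calculation of local complementations then shows that $X_u^{(a_1)}$ followed by $X_w^{(b_1)}$ sends this $R_6$ to $K_2 \sqcup K_2$ with edges $\{a_1, a_2\}$ and $\{b_1, b_2\}$.

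The main obstacle is Case 2 of sufficiency. The naive attempt of $Z$-measuring in a nonempty arc fails here: the resulting line has $b_2$ adjacent to $a_1$, violating the non-adjacency criterion of Theorem \ref{theo:linebellvm}. The fix is to reduce all the way down to the $R_6$ base case via $Y$-measurements and then execute the specific pair of $X$-measurements, whose effect on $R_6$ must be verified by an explicit (but short) computation.
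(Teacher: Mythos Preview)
Your proof is correct and follows essentially the same strategy as the paper: necessity by contrapositive via the two known obstructions, and sufficiency by a case split that, in the hard case, $Y$-measures down to the six-vertex ring and finishes there. One small correction: the interleaved obstruction on the ring is Theorem~1 in \cite{limits}, not Theorem~3 (which is the line-graph version).
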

\begin{proof}
    Consider the contrapositive of the forwards direction. All arrangements not described in the statement fall under one of the Proposition \ref{prop:ring2} or Theorem 1 in \cite{limits}, in which extraction is impossible.
    
    For the backwards direction, first consider the case where $\{a_1,b_2\}, \{a_2,b_1\} \notin E_n$. We delete all vertices between $a_1$ and $b_2$ and all vertices between $a_2$ and $b_1$. Then $Y$-measuring every remaining vertex yields the desired result.

    For the remaining case, WLOG $\{a_1,b_2\} \in E_n$ (see Figure \ref{fig:ringbellvm}). We know $\{a_1,a_2\}, \{b_1,b_2\} \notin E_n$ since we are given that no three of $\{a_1,a_2,b_1,b_2\}$ are consecutive. Then $Y$-measuring every vertex between $a_2$ and $b_1$, all but one of the vertices between $a_1$ and $a_2$, and all but one of the vertices between $b_1$ and $b_2$ results in the six vertex graph in Figure 3a in \cite{limits}, in which extraction is possible.
\end{proof}

\begin{figure}[htb]
\centering
\begin{tikzpicture}[scale=0.75, transform shape]
  \node[circle, fill=purple!60] (1) at (0,2) {\textcolor{white}{1}};
  \node[circle, fill=purple!60] (2) at (1.414,1.414) {\textcolor{white}{2}};
  \node[circle, fill=purple!60] (3) at (2,0) {\textcolor{white}{3}};
  \node[circle, fill=purple!60] (4) at (1.414,-1.414) {\textcolor{white}{4}};
  \node[circle, fill=purple!60] (5) at (0,-2) {\textcolor{white}{5}};
  \node[circle, fill=purple!60] (6) at (-1.414,-1.414) {\textcolor{white}{6}};
  \node[circle, fill=purple!60] (7) at (-2,0) {\textcolor{white}{7}};
  \node[circle, fill=purple!60] (8) at (-1.414,1.414) {\textcolor{white}{8}};
  
  \draw[gray, line width=1.5] (1) -- (2);
  \draw[gray, line width=1.5] (2) -- (3);
  \draw[gray, line width=1.5] (3) -- (4);
  \draw[gray, line width=1.5] (4) -- (5);
  \draw[gray, line width=1.5] (5) -- (6);
  \draw[gray, line width=1.5] (6) -- (7);
  \draw[gray, line width=1.5] (7) -- (8);
  \draw[gray, line width=1.5] (8) -- (1);

  \node[circle, fill=purple!60
  ] (1) at (7,2) {\textcolor{white}{1}};
  \node[circle, fill=purple!60] (2) at (8.414,1.414) {\textcolor{white}{2}};
  \node[circle, fill=purple!60] (3) at (9,0) {\textcolor{white}{3}};
  \node[circle, fill=purple!60] (4) at (8.414,-1.414) {\textcolor{white}{4}};
  \node[circle, fill=purple!15] (5) at (7,-2) {\textcolor{white}{5}};
  \node[circle, fill=purple!60] (6) at (5.586,-1.414) {\textcolor{white}{6}};
  \node[circle, fill=purple!60] (7) at (5,0) {\textcolor{white}{7}};
  \node[circle, fill=purple!60] (8) at (5.586,1.414) {\textcolor{white}{8}};
  
  \draw[gray, line width=1.5] (1) -- (2);
  \draw[gray, line width=1.5] (2) -- (3);
  \draw[gray, line width=1.5] (3) -- (4);
  \draw[gray, line width=1.5] (4) -- (6);
  \draw[gray, line width=1.5] (6) -- (7);
  \draw[gray, line width=1.5] (7) -- (8);
  \draw[gray, line width=1.5] (8) -- (1);

  \node[circle, fill=purple!60
  ] (1) at (7,-5) {\textcolor{white}{1}};
  \node[circle, fill=purple!60] (2) at (8.414,-5.586) {\textcolor{white}{2}};
  \node[circle, fill=purple!60] (3) at (9,-7) {\textcolor{white}{3}};
  \node[circle, fill=purple!60] (4) at (8.414,-8.414) {\textcolor{white}{4}};
  \node[circle, fill=purple!15] (5) at (7,-9) {\textcolor{white}{5}};
  \node[circle, fill=purple!60] (6) at (5.586,-8.414) {\textcolor{white}{6}};
  \node[circle, fill=purple!60] (7) at (5,-7) {\textcolor{white}{7}};
  \node[circle, fill=purple!15] (8) at (5.586,-5.586) {\textcolor{white}{8}};
  
  \draw[gray, line width=1.5] (1) -- (2);
  \draw[gray, line width=1.5] (2) -- (3);
  \draw[gray, line width=1.5] (3) -- (4);
  \draw[gray, line width=1.5] (4) -- (6);
  \draw[gray, line width=1.5] (6) -- (7);
  \draw[gray, line width=1.5] (7) -- (1);

  \node[circle, fill=purple!60
  ] (1) at (0,-5) {\textcolor{white}{1}};
  \node[circle, fill=purple!60] (2) at (1.414,-5.586) {\textcolor{white}{2}};
  \node[circle, fill=purple!15] (3) at (2,-7) {\textcolor{white}{3}};
  \node[circle, fill=purple!60] (4) at (1.414,-8.414) {\textcolor{white}{4}};
  \node[circle, fill=purple!15] (5) at (0,-9) {\textcolor{white}{5}};
  \node[circle, fill=purple!60] (6) at (-1.414,-8.414) {\textcolor{white}{6}};
  \node[circle, fill=purple!15] (7) at (-2,-7) {\textcolor{white}{7}};
  \node[circle, fill=purple!15] (8) at (-1.414,-5.586) {\textcolor{white}{8}};
  
  \draw[gray, line width=1.5] (1) -- (6);
  \draw[gray, line width=1.5] (2) -- (4);
  
  \draw[->] (3,0)--(4,0);
  \node at (3.5,0.3) {$Y_5$};
  \draw[->] (7,-3)--(7,-4);
  \node at (7.3,-3.5) {$Y_8$};
  \draw[->] (4,-7)--(3,-7);
  
  \draw[gray!50, line width=1, <->] (-1.2,-1) -- (-0.2,1.6);
  \draw[gray!50, line width=1, <->] (1.3,-1) -- (1.3,1);

\end{tikzpicture}
\caption{Process in Theorem \ref{theo:ringbellvm} illustrated for extracting Bell pairs $\{1,6\}$ and $\{2,4\}$ from an eight vertex graph. Vertices are $Y$-measured to yield the six vertex graph in Figure 3a in \cite{limits}, which delineates the steps to reach $K_2 \sqcup K_2$.}
\label{fig:ringbellvm}
\end{figure}
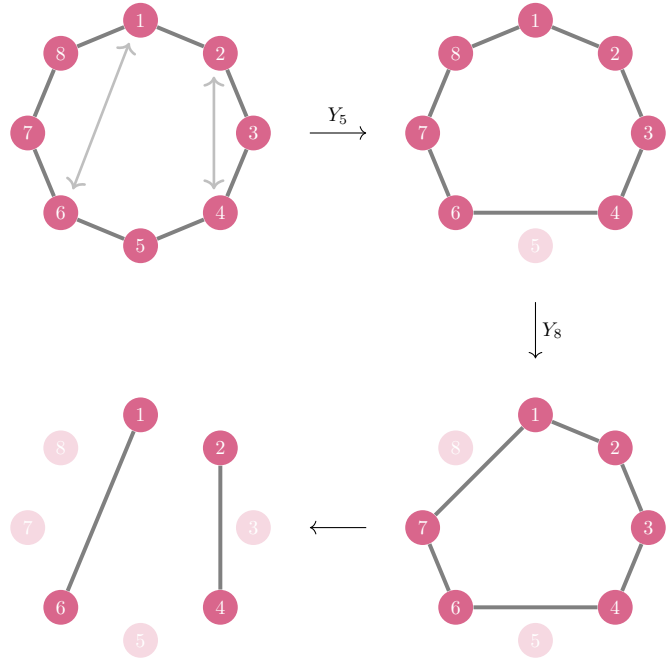

The conditions for extraction on the ring graph are identical to those for the line/tree graphs, except now extraction is possible when paths are adjacent, if both paths have more than two vertices.

\section{Discussion}

In this paper, we used recently introduced foliage concepts to completely solve the Bell vertex-minor problem for line, tree, and ring graphs. In order to fully utilize lifted local complementation, we extended the definition of the foliage partition to include a wide range of partitions on which LC-invariance is still meaningful. We used this generalization to prove a technique for reducing the vertex-minor problem, which we call ``target reduction." We also applied this definition on an existing ``source reduction" technique to illuminate the underlying foliage structure.

Our techniques could potentially be used for a variety of extraction problems and network architectures. The GHZ extraction problem was recently solved for line graphs \cite{linear}; a near identical proof to Theorem \ref{theo:treebellvm} can be used to solve the four-vertex GHZ state extraction problem on tree graphs. Similar techniques might be used to find the complete solution for tree graphs. Target reduction appears to be particularly well-suited for GHZ extraction, as every pair of vertices in the target graph are already foliage-equivalent.

We primarily applied a graph-theoretic approach to the vertex-minor problem. Problems relevant to physical quantum networks remain to be investigated. In particular, we have not attempted to minimize the number of measurements while extracting Bell pairs. Bypassing the bottlenecks we have identified using long-distance links are also necessary for physical implementations.

\section*{Acknowledgements}
\addcontentsline{toc}{section}{Acknowledgements}

This research project started at the 2023 UC Davis Math REU, which was supported by NSF grant DMS-1950928. I am incredibly grateful for Professor Bruno Nachtergaele's mentorship, without which this project could not have happened. I would also like to thank Andrew Jackson and Rahul Hingorani for their guidance, and the rest of the REU participants.

\begin{bibdiv}
\begin{biblist}
\addcontentsline{toc}{section}{References}

\bib{orbits}{article}{
   author={Adcock, J. C.},
   author={Morley-Short, S.},
   author={Dahlberg, A.},
   author={Silverstone, J. W.},
   title={Mapping graph state orbits under local complementation},
   journal={Quantum},
   volume={4},
   date={2020},
   number={1},
   pages={305},
}

\bib{tele}{article}{
   author={Bennett, C. H.},
   author={Brassard, G.},
   author={Crépeau, C.},
   author={Jozsa, R.},
   author={Peres, A.},
   author={Wootters, W. K.},
   title={Teleporting an unknown quantum state via dual classical and Einstein-Podolsky-Rosen channels},
   journal={Physical review letters},
   volume={70},
   date={1993},
   number={13},
   pages={1895},
}

\bib{lc}{article}{
   author={Bouchet, A.},
   title={$\kappa$-transformations, local complementations and switching},
   book={
      title={Cycles and rays},
      series={NATO Adv. Sci. Inst. Ser. C: Math. Phys. Sci.},
      volume={301},
      publisher={Kluwer Acad. Publ., Dordrecht},
   },
   date={1990},
   pages={41--50},
}

\bib{fpartition}{article}{
   author={Burchardt, A.},
   author={Hahn, F.},
   title={The Foliage Partition: An Easy-to-Compute LC-Invariant for Graph States},
   journal={arXiv:2305.07645},
   date={2023},
}

\bib{algo}{article}{
   author={Dahlberg, A.},
   author={Helson, J.},
   author={Wehner, S.},
   title={How to transform graph states using single-qubit operations: computational complexity and algorithms},
   journal={Quantum Science and Technology},
   volume={5},
   date={2020},
   number={4},
   pages={045016},
}

\bib{bellvm}{article}{
   author={Dahlberg $\mspace{-5mu}$, A.},
   author={Helson, J.},
   author={Wehner, S.},
   title={Transforming graph states to Bell-pairs is NP-Complete.},
   journal={Quantum},
   volume={4},
   date={2020},
   pages={348},
}

\bib{transform}{article}{
   author={Dahlberg, A.},
   author={Wehner, S.},
   title={Transforming graph states using single-qubit operations},
   journal={Philos. Trans. Roy. Soc. A},
   volume={376},
   date={2018},
   number={2123},
   pages={20170325},
}

\bib{diestel}{book}{
   author={Diestel, Reinhard},
   title={Graph Theory},
   publisher={Spring-Verlag Heidelberg, New York},
   date={2005},
}

\bib{stabilizer}{article}{
   author={Gottesman, D.},
   title={The Heisenberg representation of quantum computers},
   journal={arXiv:quant-ph/9807006},
   date={1998},
}

\bib{limits}{article}{
   author={Hahn, F.},
   author={Dahlberg, A.},
   author={Eisert, J.},
   author={Pappa, A.},
   title={Limitations of nearest-neighbor quantum networks},
   journal={Phys. Rev. A},
   volume={106},
   date={2022},
   number={1},
   pages={L010401},
}

\bib{qnrlc}{article}{
   author={Hahn, F.},
   author={Pappa, A.},
   author={Eisert, J.},
   title={Quantum network routing and local complementation},
   journal={npj Quantum Information},
   volume={5},
   date={2019},
   number={1},
   pages={76},
}

\bib{gstates}{article}{
   author={Hein, M.},
   author={D\"{u}r, W.},
   author={Eisert, J.},
   author={Raussendorf, R.},
   author={Van den Nest, M.},
   author={Briegel, H. J.},
   title={Entanglement in graph states and its applications},
   conference={
      title={Quantum computers, algorithms and chaos},
   },
   book={
      series={Proc. Internat. School Phys. Enrico Fermi},
      volume={162},
      publisher={Amsterdam},
   },
   date={2006},
}

\bib{butterfly}{article}{
  author = {Leung, D.},
  author = {Oppenheim, J.},
  author = {Winter, A.},
  journal = {IEEE Transactions on Information Theory},
  title = {Quantum network communication—the butterfly and beyond},
  volume = {56},
  number = {7},
  date = {2010},
  pages = {3478-3490},
}

\bib{satellite}{article}{
  author = {S. K. Liao \textit{et al}},
  journal = {Phys. Rev. Lett.},
  title = {Satellite-Relayed Intercontinental Quantum Network},
  volume = {120},
  number = {3},
  date = {2018},
  pages = {030501},
}

\bib{multiparty}{article}{
  author = {Mannalath, V.},
  author = {Pathak, A.},
  journal = {arXiv:2211.06690},
  title = {Multiparty entanglement routing in quantum networks},
  date = {2022},
}

\bib{linear}{article}{
   author={de Jong, J.},
   author={Hahn, F.},
   author={Tcholtchev, N.},
   author={Hauswirth, M.},
   author={Pappa, A},
   title={Extracting maximal entanglement from linear cluster states},
   journal={arXiv:2211.16758},
   date={2022},
}

\bib{nandc}{book}{
   author={Nielsen, Michael A.},
   author={Chuang, Isaac L.},
   title={Quantum computation and quantum information},
   publisher={Cambridge University Press, Cambridge},
   date={2000},
}

\bib{vminor}{article}{
   author={Oum, S. I.},
   title={Rank-width and vertex-minors},
   journal={Journal of Combinatorial Theory, Series B},
   volume={95},
   date={2005},
   number={1},
   pages={79-100},
}

\bib{noisy}{article}{
   author={Pirandola, S.},
   author={Laurenza, R.},
   author={Ottaviani, C.},
   author={Banchi, L},
   title={Fundamental limits of repeaterless quantum communications},
   journal={Nature communications},
   volume={8},
   date={2017},
   number={1},
   pages={15043},
}

\bib{qnet}{article}{
   author={Simon, C.},
   title={Towards a global quantum network},
   journal={Nature Photonics},
   volume={11},
   date={2017},
   number={11},
   pages={678-680},
}

\bib{graphical}{article}{
   author={Van den Nest, M.},
   author={Dehaene, J.},
   author={De Moor, B.},
   title={Graphical description of the action of local Clifford transformations on graph states},
   journal={Phys. Rev. A},
   volume={69},
   date={2004},
   number={2},
   pages={022316},
}

\end{biblist}
\end{bibdiv}

\newpage
\section*{Appendix}

\begin{proof}[Proof of Proposition \ref{prop:feer}.]
    The core of the proof is transitivity.
    \begin{itemize}
        \item \textbf{Reflexive}: Given in definition.
        \item \textbf{Symmetric}: Follows from definition.
        \item \textbf{Transitive}: Given distinct vertices $v,w,u\in G$ with $v\sim_\text{F} w$ and $w \sim_\text{F} u$, we have 6 cases:
        \begin{enumerate}[label=(\arabic*)]
            \item $\{v,w\}, \{w,u\} \in T_G$: Then $N_v \setminus w = N_w \setminus v$ and $N_w \setminus u = N_u \setminus w \implies (N_v\setminus u)\setminus w = (N_u \setminus v) \setminus w$. Consider $w \in N_v \iff v\in N_w \iff v\in N_u \iff u \in N_v \iff u \in N_w \iff w \in N_w$, which yields $N_v\setminus u = N_u \setminus v$, so $v$ and $u$ are twins.
            \item $\{v,w\} \in T_G$, $(w,u) \in L_G$: Since $N_w = \{u\}$, we must have $N_v = \{u\}$. Then $v$ is a leaf of $u$.
            \item $\{v,w\} \in T_G$, $(u,w) \in L_G$: The twin relation implies $u \in N_v$, which is impossible since $u$ has degree 1.
            \item $(v,w), (w,u) \in L_G$: Impossible.
            \item $(v,w), (u,w) \in L_G$: Then $N_v=N_u=\{w\}$ and $v$ and $u$ are twins.
            \item $(w,v), (w,u) \in L_G$: Impossible.
        \end{enumerate}
    \end{itemize}
    So $\sim_\text{F}$ is an equivalence relation.
\end{proof}

\begin{proof}[Proof of Theorem \ref{theo:fplci}.]
    Equivalently, we want to show that $v\sim_\text{F} w$ in $G$ iff $v' \sim_\text{F} w'$ in $\tau_a(G)$.
    
    If $\{v,w\} \in T_G$, consider the location of $a$:
    \begin{enumerate}[label=(\arabic*)]
        \item $a=v$ (or $a=w$) and $\{v,w\} \in E$: Then $w$ is disconnected from all its neighbors except $v$, and $(w,v)$ is now a leaf-axil pair.
        \item $a=v$ (or $a=w$) and $\{v,w\} \notin E$: Then $N_v, N_w$ are unchanged and $v$ and $w$ remain twins.
        \item $a \in N_v \setminus w$ ($\implies a \in N_w \setminus v$): Then $\tau_a$ complements the edge $\{v,w\}$ but otherwise leaves $N_v, N_w$ unchanged. So $v$ and $w$ remain twins.
        \item $a \notin N_v \cup N_w \cup \{v,w\}$: Then $N_v, N_w$ are unchanged and $v$ and $w$ remain twins.
    \end{enumerate}
    If $(v,w) \in L_G$, again consider the location of $a$:
    \begin{enumerate}[label=(\arabic*)]
        \item $a = v$: Since $N_a = w$, $\tau_a$ does nothing.
        \item $a = w$: Then $v$ gains all of the neighbors of $w$, and $v$ and $w$ are now a twin pair.
        \item $a \notin \{v,w\}$: Then $\tau_a$ cannot change the neighbors of $v$, and $v$ and $w$ remain a leaf-axil pair.
    \end{enumerate}
    Since $\tau_a$ is its own inverse, both directions are identical. Applying Theorem \ref{theo:sequence}, foliage-equivalence, and thus the foliage partition are LC-invariant.
\end{proof}

\end{document}